\def\L8{L_\infty}
\def\fJ(E){\mathfrak E}
\def\fJ{\mathfrak J}
\newtheorem{thm}{Theorem}[section]
\newtheorem{conj}[thm]{Conjecture}
\newtheorem{prop}[thm]{Proposition}
\newtheorem{lem}[thm]{Lemma}
\newtheorem{cor}[thm]{Corollary}
\theoremstyle{definition}
\newtheorem{defn}[thm]{Definition}
\newtheorem{rmk}[thm]{Remark}
\newtheorem{const}[thm]{Construction}
\newcommand*{\doublerightarrow}[2]{\mathrel{
  \settowidth{\@tempdima}{$\scriptstyle#1$}
  \settowidth{\@tempdimb}{$\scriptstyle#2$}
  \ifdim\@tempdimb>\@tempdima \@tempdima=\@tempdimb\fi
  \mathop{\vcenter{
    \offinterlineskip\ialign{\hbox to\dimexpr\@tempdima+1em{##}\cr
    \rightarrowfill\cr\noalign{\kern.5ex}
    \rightarrowfill\cr}}}\limits^{\!#1}_{\!#2}}}
\newcommand*{\triplerightarrow}[1]{\mathrel{
  \settowidth{\@tempdima}{$\scriptstyle#1$}
  \mathop{\vcenter{
    \offinterlineskip\ialign{\hbox to\dimexpr\@tempdima+1em{##}\cr
    \rightarrowfill\cr\noalign{\kern.5ex}
    \rightarrowfill\cr\noalign{\kern.5ex}
    \rightarrowfill\cr}}}\limits^{\!#1}}}
\title{Factorization Algebras for Linearized Gravity}
\author{Filip Dul}
\address{Department of Mathematics,
Rutgers -- New Brunswick, Piscataway, NJ 08854, USA}
\email{fd189 [AT] rutgers [DOT] edu}
\begin{document}

\maketitle

\epigraph{I  promise nothing complete; because any human thing supposed to be complete, must for that very reason infallibly be faulty.}{Herman Melville, \emph{Moby-Dick}}

\begin{abstract}
	The purpose of this work is to bring gravitational theories into play within the quickly developing framework of factorization algebras. We fit the causal structure of Lorentzian manifolds into categorical language, and in the globally hyperbolic case discover a convenient equivalence of coverages. Then, we show how both perturbative general relativity and perturbative conformal gravity define Batalin-Vilkovisky classical field theories. Finally, we describe how the observables of linearized general relativity define a particularly nice factorization algebra on the category of all globally hyperbolic manifolds and present a few conjectures which arise in specific cases, primarily motivated by the study of black hole entropy as a conserved Noether charge.
\end{abstract}

\tableofcontents

\section{Introduction}

This article is one in a set of papers related to joint work with Ryan Grady and Surya Raghavendran concerning factorization algebras arising in gravitation. Factorization algebras as defined in \cite{cosgwill1} and \cite{cosgwill2} are designed to minimally axiomatize the observables of a field theory as a (nearly) multiplicative cosheaf on spacetime. An active effort is underway to understand how factorization algebras are useful in understanding models from physics, like Yang-Mills theory and gravity. Papers like \cite{bps}\ and \cite{gwillrejzfree} form connections between the algebraic quantum field theory (AQFT) approach to functorial physics and factorization algebras adapted to the Lorentzian setting.  The current paper sets out on a program to apply constructions from the aforementioned sources to ascertain the usefulness of factorization algebras (and the Batalin-Vilkovisky formalism from which they arise) in the gravitational setting. The factorization algebra framework is undergoing rapid development, as described in \cite{cosgwill3}, and as that framework expands, we expect that interesting results and insights into gravitational theories will be discovered along the way. As such, we find it necessary to begin fusing into the Costello-Gwilliam approach to factorization algebras (as found in \cite{cosgwill2}) the traditional methods of gravitational theory as well as modern attempts to understand it via the Batalin-Vilkovisky methodology, as found in \cite{cattaneoschiavina} and similar work.

In Section \ref{section2}, we review the theory of causal structure of Lorentzian manifolds. Our approach will have two focal points: (1) we phrase much of our understanding in terms of the category $\mathbf{Loc}$ of all oriented and time-oriented globally hyperbolic Lorentzian manifolds, so as to define factorization algebras of observables on it, and (2) we keep an eye out on the structure of the moduli stack $[\mathrm{Lor}(M)/\mathrm{Diff}^{+}(M)]$ of all Lorentzian metrics modulo time-orientation preserving diffeomorphisms, which is quite different from its Riemannian analogue. Stacks in this instance provide a useful generalization of moduli spaces where stabilizers of the group action are ``remembered" in the quotient: this was our object of study in \cite{dul}. Although most of this section is review, Proposition (\ref{grothendieckequivalences}) is an original result which provides a convenient equivalence of coverages on $\mathbf{Loc}$ which fails in more general categories of spacetimes.

Section \ref{bvgravity} is the cornerstone of this work. In it, we provide the most general prescription of a classical Batalin-Vilkovisky field theory to a generally covariant theory of gravity. General covariance admits a gauge freedom of the form $g \mapsto g + L_{X}g$, and we make this precise in Construction \ref{BVtheory}. In Section \ref{einstein}, we consider the specific case of general relativity--often called Einstein gravity--and show in Proposition \ref{einsteinisBV} that the formal moduli problem around a fixed Einstein metric $g$ in the stack $[\mathrm{Lor}(M)/\mathrm{Diff}^{+}(M)]$ of Lorentzian metrics modulo time-orientation preserving diffeomorphisms defines a classical BV theory. This is a version of a result in \cite{cattaneoschiavina} which situates gravity as a BV theory in the stack-oriented formalization of \cite{cosgwill2}.  In Section \ref{weyl}, we discuss conformal gravity, which is sometimes called Weyl gravity.\footnote{The name is not ideal because of its similarity to ``Weyl \textit{geometry}", a separate field of study. So it goes.} This allows us to slightly enlarge the gauge freedom as $g \mapsto g + L_{X}g + 2fg$, for $f$ a smooth function on $M$. We show in Proposition \ref{weylisBV} that this defines a classical BV theory, and describe how in the case of a conformally flat and Einstein background metric $g \in [\mathrm{Lor}(M)/\mathrm{Diff}^{+}(M)]$, perturbative general relativity includes into perturbative conformal gravity as a BV theory. These are original results.

In Section \ref{factorizationalgebras}, we consider the underlying cochain complex of perturbative general relativity, which in the preceding section we viewed at the level of differential graded Lie algebras. This truncation allows us to fit linearized general relativity into the context of \cite{bms}: we show in Proposition \ref{greenswitnessGR} that the associated cochain complex admits a Green's witness, which implies that the cochain complex is Green hyperbolic by results from \cite{bmsgreen}. As such, the classical observables of linearized general relativity define a time-ordered and Cauchy constant prefactorization algebra on $\mathbf{Loc}$: we explain how this follows from \cite{bms} in Corollary \ref{observablesforGR}. The rest of this section outlines a series of conjectures which are under investigation now and will be the content of future articles. This is mostly motivated by joint work with Ryan Grady and Surya Raghavendran concerning black hole thermodynamics, but also by the desire to generalize or apply results from \cite{bms} and \cite{gwillrejz}.

\section{Causal Structure}\label{section2}

\subsection{Lorentzian Geometry}
To begin our study of factorization algebras, we must recapitulate some essential ingredients of Lorentzian geometry and topology. We will focus mostly on the ideas introduced in Chapter 14 of \cite{oneil}. Other sources will be cited as needed. 

Recall that a Lorentzian metric $g$ on a manifold $M$ of dimension $n$ is a nondegeneraate section of $\mathrm{Sym}^{2}T^{*}M$ with signature $(1,n-1)$, often denoted $(-,+,\ldots,+)$. In general relativity, we usually assume $M$ is four dimensional, so that the metric signature is $(1,3)$. A typical example of a Lorentzian manifold is Minkowski space $\mathbf{R}^{1,3}$, with metric denoted $\eta = -dt^{2} + dx^{2} + dy^{2} +dz^{2}$: at a given point, the norm of a tangent vector $v = \langle v_{0},v_{1},v_{2},v_{3} \rangle$ to $\mathbf{R}^{1,3}$ is therefore $\eta(v,v) = -v_{0}^{2} + v_{1}^{2} + v_{2}^{2} +v_{3}^{2}$. Every tangent space to a four dimensional Lorentzian manifold $M$ is modeled after $\mathbf{R}^{1,3}$, and so we arrive at the following definition. 

\begin{defn}\label{tangentvectors}
	(a) A tangent vector $v \in T_{p}M$ to a Lorentzian manifold $M$ is \textbf{timelike} if $g_{p}(v,v) < 0$, \textbf{spacelike} if $g_{p}(v,v) > 0$, and \textbf{null}, or \textbf{lightlike}, if $g_{p}(v,v) = 0$. A \textbf{causal} (or \textbf{non-spacelike}) vector is one which is either timelike or null. 
	(b) The \textbf{lightcone} (or \textbf{nullcone}) of $T_{p}M$ is the set of all null vectors based at $p$. 
\end{defn}

\begin{rmk}\label{noballs}
This definition lifts naturally to the case of a vector field $V$ on $M$: $V$ is called \textit{timelike} if $g(V,V) < 0$, etc. These definitions are what make the study of Lorentzian geometry so distinct. Although many constructions from Riemannian geometry carry over nicely, others do not. Significantly, \textit{there is no good embedding of the category of Lorentzian manifolds into the category of metric spaces} \cite{mullersanchez}. This means that there is no analogue of the Hopf-Rinow Theorem in the Lorentzian setting, and the topology generated by small open balls is not a useful one, since a ball of any radius is not bounded. 
\end{rmk}

The lightcone in $T_{p}M$ splits into two parts: a ``future" lightcone and a ``past" lightcone. The area enclosed by the future lightcone is called $p$'s \textit{future}, and the area enclosed by the past lightcone is called $p$'s \textit{past}.  We are mostly interested in Lorentzian manifolds which have a consistent choice of future and past throughout $M$: i.e., they are ``time-orientable". The following definition is adapted from Lemma 5.32 in \cite{oneil}.

\begin{defn}
	A Lorentzian manifold is \textbf{time-orientable} if and only if there exists an everywhere timelike vector field $V$ (called the \textbf{time orientation}) on $M$. Such a time-oriented Lorentzian manifold is called a \textbf{spacetime}.
\end{defn}

\begin{rmk}
In light of this, the moduli stack for general relativity is $[\mathrm{Lor}(M)/\mathrm{Diff}^{+}(M)]$, the stack of time-oriented Lorentzian metrics on $M$ modulo time-orientation preserving diffeomorphisms. This is a particularly interesting stack because $\mathrm{Lor}(M)$ itself is not a convex cone in the way that $\mathrm{Riem}(M)$ is, as mentioned in \cite{mullersanchez}.  $\mathrm{Lor}(M)$ usually has multiple connected components and is therefore also not contractible. 
\end{rmk}

To globalize the local information of Definition \ref{tangentvectors}, we introduce the following:

\begin{defn}\label{causality}
	(a) A curve from a point (or \textbf{event}) $p$ to a point $q$ in $M$ is called \textbf{timelike} if each of its tangent vectors is timelike.\footnote{The analogous definition holds for spacelike, null, and causal.} 
	(b) For events $p,q \in M$, we say $p << q$ if there exists a future-pointing timelike curve in $M$ from $p$ to $q$, and we say $p<q$ if there exists a future-pointing causal curve in $M$ from $p$ to $q$. (c) The \textbf{chronological future of $p$} is the set $I^{+}(p) := \{ q \in M : p <<q \}$ and the \textbf{causal future of $p$} is the set $J^{+}(p) := \{ q \in M : p \leq q \}$.\footnote{By ``$p \leq q$", we are allowing $p=q$. This means that $A \cup I^{+}(A) \subset J^{+}(A)$ in the following remark.} The \textbf{chronological past} and \textbf{causal past} of $p$, denoted $I^{-}(p)$ and $J^{-}(p)$, are defined analogously. (d) Given a subset $A$ of $M$, its \textbf{chronological future} is $I^{+}(A)$: the union of all $I^{+}(p)$ such that $p \in A$. Similar definitions hold for $I^{-}(A)$ and $J^{+/-}(A)$.
\end{defn}

One could put $M$ in the subscript--for example $I^{+}_{M}(A)$ instead of $I^{+}(A)$--to be explicit, which we occasionally do. There is an important way of distinguishing sets which are meaningfully spacelike,  in the sense that a timelike (or perhaps causal) curve only meets the set through one point. 

\begin{defn}
	A subset $A$ of $M$ is \textbf{achronal} if the relation $p << q$ does not hold for any $p,q \in A$. A subset $B$ is \textbf{acausal} if the relation $p < q$ does not hold for any $p,q \in B$.
\end{defn}

Note that achronal implies acausal, but not vice versa. A typical example of achronal sets are the constant time hypersurfaces $\{t = t_{0}\}$ in $\mathbf{R}^{1,3}$. These hypersurfaces are also examples of \textit{Cauchy surfaces}, which are best defined in the context of \textit{Cauchy developments}. 

\begin{defn}\label{cauchy}
	(a) The \textbf{future Cauchy development} of an achronal set $A \subset M$ is the set $D^{+}(A)$ of all points $p$ in $M$ such that every past-inextendible causal curve through $p$ meets $A$. The \textbf{past Cauchy development} $D^{-}(A)$ is defined analogously, and the \textbf{Cauchy development} of $A$ is $D(A) := D^{+}(A) \cup D^{-}(A)$. (b) A \textbf{Cauchy surface} (or  Cauchy hypersurface) is an achronal set $C$ such that $D(C) = M$. 
\end{defn}

	Equivalently, a Cauchy surface is a set $C$ which is met exactly once by every inextendible causal curve in $M$. One may heuristically imagine that $C$ represents a ``snapshot in time" of the spacetime $M$. In fact, when $M$ is \textit{globally hyperbolic}, it can be decomposed as $C\times \mathbf{R}$.
	
\begin{defn}
A spacetime $M$ is \textbf{globally hyperbolic} if it is causal (i.e. contains no closed causal curves) and if for all $p,q \in M$, the sets $J^{+}(p) \cap J^{-}(q)$ are compact.
\end{defn}

This definition implies that the spacetime $M$ is equivalent to $C \times \mathbf{R}$ for some Cauchy surface $C$. 

\begin{rmk}
	Note that if $q$ is directly in $p$'s future, in the sense that they map to the same point under the projection $C\times \mathbf{R} \to C$, then $J^{+}(p) \cap J^{-}(q)$--often called a \textit{causal diamond}--is itself a globally hyperbolic spacetime, whose Cauchy surface is a subset of $C$. More often, one would think of $I^{+}(p) \cap I^{-}(q)$ as a ``subspacetime" since it is an open set. 
	
	The sets $I^{+}(p) \cap I^{-}(q)$ for all $p,q \in M$ are sometimes called \textit{Alexandrov intervals}, and generate what is known as the \textit{Alexandrov topology}. The Alexandrov topology coincides with the ordinary manifold topology of $M$ if and only if $M$ is strongly causal (i.e. $M$ contains no causal curves which are \textit{almost} closed, in a way which can be made precise) \cite{minguzzi}. If a spacetime is globally hyperbolic, this is the case; in this way, one can think of the Alexandrov topology as an analogue of the topology generated by open metric balls in a Riemannian manifold, which we mentioned in Remark \ref{noballs} does not port over to the Lorentzian case.
\end{rmk}

In the home stretch to our definition of factorization algebras for the category $\mathbf{Loc}$ of globally hyperbolic spacetimes, we must slightly generalize the idea of causal diamonds.

\begin{defn}
	(a) A subset $S \subseteq M$ of a globally hyperbolic time-oriented Lorentzian manifold is called \textbf{causally convex} if $J^{+}(S) \cap J^{-}(S) \subseteq S$. (b) Two subsets $S, S'$ are called \textbf{causally disjoint} if $(J^{+}(S) \cup J^{-}(S)) \cap S' = \varnothing$. 
\end{defn}

In particular, $S \subseteq M$ is causally convex if every causal curve which begins and ends in $S$ is contained entirely in $S$. With this, we can define our main category of interest.

\begin{defn}
	$\mathbf{Loc}$ is defined to be the category whose objects are all oriented and time-oriented globally hyperbolic Lorentzian manifolds (i.e. all oriented globally hyperbolic spacetimes) of a fixed dimension and whose morphisms are the orientation and time-orientation preserving isometric embeddings $f: M \to N$ with causally convex and open image $f(M) \subseteq N$. 
\end{defn}

\begin{defn}
	(a) A $\mathbf{Loc}$-morphism $f : M \to N$ is called a \textbf{Cauchy morphism} if its image $f(M) \subseteq N$ contains a Cauchy surface of $N$. These are often denoted $f : M \xrightarrow{c} N$. \\
	(b) A pair $(f_{1} : M_{1} \to N, f_{1} : M_{1} \to N)$ of $\mathbf{Loc}$-morphisms is called \textbf{causally disjoint} if their images $f_{1}(M_{1})$ and $f_{2}(M_{2})$ are causally disjoint in $N$: denoted $f_{1} \perp f_{2}$. \\
	(c) An $n$-tuple of $\mathbf{Loc}$-morphisms $(f_{1} : M_{1} \to N, \ldots, f_{n} : M_{n} \to N)$ is called \textbf{pairwise disjoint} if the images $f_{i}(M_{i})$ are pairwise disjoint in $N$: denoted $\underline{f} = (f_{1},\ldots,f_{n}) : \underline{M} \to N$. 
\end{defn}

\subsection{Factorization Algebras} Given a classical field theory coming from an action functional on its field space $\mathscr{E}$ and with underlying manifold/spacetime $M$, we extract from that functional the equations of motion of the theory. Because equations of motion are PDEs, which impose constraints locally on $M$, they define\footnote{At least in the case of smooth or distributional fields, or others which have constraints imposed \textit{locally}.} a sheaf $\mathrm{Sol}_{\mathscr{E}} : \mathrm{Open}(M) \to \mathscr{C}$, where $\mathscr{C}$ denotes a symmetric monoidal category. The observables $\mathrm{Obs}_{\mathscr{E}}$ are the function ring on $\mathrm{Sol}_{\mathscr{E}}$, and under some assumptions form a cosheaf on $M$. Moreover, for disjoint $U,V \in \mathrm{Open}(M)$, the equivalence $\mathrm{Sol}_{\mathscr{E}}(U \sqcup V) \cong \mathrm{Sol}_{\mathscr{E}}(U) \times \mathrm{Sol}_{\mathscr{E}}(V)$ gives a structure map $\mathrm{Obs}_{\mathscr{E}}(U) \otimes \mathrm{Obs}_{\mathscr{E}}(V) \to \mathrm{Obs}_{\mathscr{E}}(U \sqcup V)$, making $\mathrm{Obs}_{\mathscr{E}}$ a multiplicative cosheaf\footnote{Sort of: this map would have to be an equivalence for $\mathrm{Obs}_{\mathscr{E}}$ to be truly multiplicative.} called a \textit{factorization algebra}. Precise details on this construction are provided in \cite{cosgwill1} and elsewhere, but we are particularly interested in a variant of the formal definition which is provided in \cite{bps}. We will restrict the source category to open sets which are causally well-behaved, and $\mathscr{C} = \mathbf{Ch}$ will denote the category of cochain complexes.

\begin{defn}\label{PFA}
	 A \textbf{prefactorization algebra} $\mathcal{F}$ on $\mathbf{Loc}$ with values in $\mathbf{Ch}$ is a functor assigning: (i) for each $M \in \mathbf{Loc}$, an object $\mathcal{F}(M) \in \mathbf{Ch}$ and (ii) for each tuple $\underline{f} = (f_{1},\ldots,f_{n}) : \underline{M} \to N$ of pairwise disjoint morphisms a $\mathbf{Ch}$-morphism $\mathcal{F}(\underline{f}) : \bigotimes_{i=1}^{n} \mathcal{F}(M_{i}) \to \mathcal{F}(N)$--called a \textbf{factorization product}--with the convention that $\varnothing \to N$ is assigned a morphism $I \to \mathcal{F}(N)$ from the monoidal unit. 
	This data must satisfy: (1) for every $\underline{f} = (f_{1},\ldots,f_{n}) : \underline{M} \to N$ and $\underline{g}_{i} = (g_{i1}, \ldots, g_{ik_{i}}) : \underline{L}_{i} \to M_{i}$, for $i = 1, \ldots, n$, the diagram
	\[\begin{tikzcd}
	{\bigotimes_{i=1}^{n} \bigotimes_{j=1}^{k_{i}} \mathcal{F}(L_{ij})} && {\bigotimes_{i=1}^{n} \mathcal{F}(M_{i})} \\
	\\
	&& {\mathcal{F}(N)}
	\arrow["{\otimes_{i} \mathcal{F}(\underline{g}_{i})}", from=1-1, to=1-3]
	\arrow["{\mathcal{F}(\underline{f}(\underline{g}_{i},\ldots,\underline{g}_{n})) }"', from=1-1, to=3-3]
	\arrow["{\mathcal{F}(\underline{f})}", from=1-3, to=3-3]
\end{tikzcd}\]
commutes, where $\underline{f}(\underline{g}_{i},\ldots,\underline{g}_{n}) := (f_{1}g_{11},\ldots, f_{n}g_{nk_{n}})$ is given by composition in $\mathbf{Loc}$; (2) for every $M \in \mathbf{Loc}$, $\mathcal{F}(\mathrm{id}_{M}) = \mathrm{id}_{\mathcal{F}(M)}$; and (3) for every $\underline{f} = (f_{1},\ldots,f_{n}) : \underline{M} \to N$ and permutation $\sigma \in S_{n}$, the diagram \[\begin{tikzcd}
	{\bigotimes_{i=1}^{n} \mathcal{F}(M_{i})} && {\mathcal{F}(N)} \\
	\\
	{\bigotimes_{i=1}^{n} \mathcal{F}(M_{\sigma(i)})}
	\arrow["{\mathcal{F}(\underline{f})}", from=1-1, to=1-3]
	\arrow["\sigma"', from=1-1, to=3-1]
	\arrow["{\mathcal{F}(\underline{f}\sigma)}"', from=3-1, to=1-3]
\end{tikzcd}\] commutes, where $\underline{f}\sigma$ is given by right permutation. A morphism $\zeta : \mathcal{F} \to \mathcal{G}$ of prefactorization algebras is a family $\zeta_{M} : \mathcal{F}(M) \to \mathcal{G}(M)$ of $\mathbf{Ch}$-morphisms for all $M \in \mathbf{Lor}$ which is compatible with factorization products. The category of all prefactorization algebras on $\mathbf{Loc}$ will be denoted $\mathbf{PFA}$.
\end{defn}

\begin{defn}\label{timeorderedPFA}[Definition 2.11 of \cite{bms}]
	An $n$-tuple $\underline{f} = (f_{1},\ldots,f_{n}) : \underline{M} \to N$ is called \textbf{time-ordered} if $J^{+}_{N}(f_{i}(M_{i})) \cap f_{j}(M_{j}) = \varnothing $ for $i < j$. Such an $n$-tuple is called \textbf{time-orderable} if there exists a permutation of the indices--called a \textbf{time-ordering permutation}--which makes it time-ordered. A \textbf{time-ordered prefactorization algebra} takes as an input a time-ordered $n$-tuple $\underline{f} = (f_{1},\ldots,f_{n}) : \underline{M} \to N$ and outputs a \textbf{time-ordered product} $\mathcal{F}(\underline{f}) : \bigotimes_{i=1}^{n} \mathcal{F}(M_{i}) \to \mathcal{F}(N)$.\footnote{Otherwise, the rest of this definition follows analogously from that of Definition \ref{PFA}.}
\end{defn}

\begin{rmk}
	A common definition of prefactorization algebra focuses on a fixed manifold or spacetime $M$, and $\mathbf{Loc}$ is replaced by the category of open sets (or in our case the category of open sets which are also elements of $\mathbf{Loc}$) on $M$. We will now present a category which combines the two perspectives: we consider opens in a fixed $M$ which are in $\mathbf{Loc}$. 
\end{rmk}

\begin{defn}
	For $M \in \mathbf{Loc}$, let $\mathbf{RC}_{M}$ denote the category of relatively compact and causally convex open sets in $M$ with morphisms given by inclusion. 
\end{defn}

	For every $M \in \mathbf{Loc}$, $\mathbf{RC}_{M}$ is a directed set \cite{bps}. The orientation, time-orientation, and the Lorentzian metric from $M$ restrict to $U \in \mathbf{RC}_{M}$ so that $U \in \mathbf{Loc}$. The $\mathbf{RC}_{M}$-morphism $U \subseteq V$ defines a $\mathbf{Loc}$-morphism $\iota_{U}^{V} : U \to V$. Hence, we see that $\mathbf{RC}_{M} \subset \mathbf{Loc}$ is a full subcategory, and so we can restrict any $\mathcal{F} \in \mathbf{PFA}$ to a functor $\mathcal{F}|_{M} : \mathbf{RC}_{M} \to \mathbf{Ch}$. This is  the ``usual" notion of a prefactorization algebra from \cite{cosgwill1} in that it is defined on a fixed manifold $M$ and the source category is a (sub)category of open sets on $M$.
	
	Note that for globally hyperbolic spacetimes, the Alexandrov intervals $I^{+}(p) \cap I^{-}(q)$ are the ``typical" elements of $\mathbf{RC}_{M}$.\footnote{For a non-globally hyperbolic spacetime, they may not be relatively compact.} We can make this precise as follows. Let $\mathcal{A}_{M}$ denote the category of Alexandrov intervals for $M \in \mathbf{Loc}$. For each $M \in \mathbf{Loc}$, we have a covering family $\{ U_{i} \to M \}$ where each $U_{i}$ is in $\mathcal{A}_{M}$. If $f : M \to N$ is a $\mathbf{Loc}$-morphism, then $\{ f^{-1}(U_{i}) \to N \}$ may not all be relatively compact, but a relatively compact refinement exists. Moreover, the sets $f^{-1}(U_{i})$ may not be Alexandrov intervals, but they are causally convex, so that they are in $\mathbf{RC}_{M}$. As such, for each point $x_{i} \in f^{-1}(U_{i})$, an Alexandrov interval $I^{+}(p_{i}) \cap I^{-}(q_{i})$ may be found such that $I^{+}(p_{i}) \cap I^{-}(q_{i}) \subset  f^{-1}(U_{i})$,\footnote{Then construction is a bit tedious, so we omit it here.} so that an Alexandrov refinement of $f^{-1}(U_{i})$ exists. 
	
	In this way, $\mathcal{A}$ defines a coverage on $\mathbf{Loc}$. $\mathbf{RC}_{M}$ similarly defines a coverage on $\mathbf{Loc}$, and indeed, for a fixed $M \in \mathbf{Loc}$, the above argument shows that these coverages define equivalent topologies on $M$. This can be slightly adapted to arrive at the following convenient fact.

\begin{prop}\label{grothendieckequivalences}
$\mathcal{A}$ and $\mathbf{RC}$ define equivalent coverages on the site $\mathbf{Loc}$. 
\end{prop}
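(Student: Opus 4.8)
The plan is to prove the claimed equivalence of coverages by invoking the standard criterion from Grothendieck topology theory: two coverages on the same category generate the same Grothendieck topology if and only if every covering family in one coverage can be refined by a covering family in the other, and vice versa. Thus I would reduce the proposition to establishing two mutual refinement statements, namely that every $\mathbf{RC}$-cover admits an $\mathcal{A}$-refinement and that every $\mathcal{A}$-cover admits an $\mathbf{RC}$-refinement. The second of these is essentially immediate, since an Alexandrov interval $I^{+}(p) \cap I^{-}(q)$ in a globally hyperbolic $M$ is itself causally convex and relatively compact (the latter following from the compactness of causal diamonds $J^{+}(p) \cap J^{-}(q)$ guaranteed by global hyperbolicity), so that $\mathcal{A}_M \subseteq \mathbf{RC}_M$ as a subcategory and any $\mathcal{A}$-cover is already an $\mathbf{RC}$-cover.

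The substance of the argument lies in the first direction. First I would take a covering family $\{U_i \to M\}$ with each $U_i \in \mathbf{RC}_M$, and for each point $x \in U_i$ produce an Alexandrov interval $I^{+}(p) \cap I^{-}(q)$ containing $x$ and contained in $U_i$. This is exactly the local construction gestured at in the paragraph preceding the proposition: using that $U_i$ is open and causally convex, one selects $p \in I^{-}(x)$ and $q \in I^{+}(x)$ sufficiently close to $x$ so that the resulting interval stays inside $U_i$; causal convexity of $U_i$ is what prevents the interval from ``leaking'' out along causal curves. Ranging over all $i$ and all $x \in U_i$ then yields an $\mathcal{A}$-cover refining the original $\mathbf{RC}$-cover.

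The one genuinely delicate point is the behavior under $\mathbf{Loc}$-morphisms $f : M \to N$, which is what distinguishes a coverage from a mere family of covers on each object and is where the paper emphasizes that the result fails in more general categories of spacetimes. Here I would verify the stability/pullback axiom of a coverage: given a cover of $N$ and a morphism $f : M \to N$, the preimages $f^{-1}(U_i)$ must again admit a refinement of the appropriate type. Following the remark in the text, the $f^{-1}(U_i)$ are causally convex (since $f$ is an isometric embedding with causally convex image, preimages of causally convex sets remain causally convex), hence lie in $\mathbf{RC}_M$ after passing to a relatively compact refinement, and then the local construction above upgrades this to an $\mathcal{A}$-refinement. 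The hard part will be controlling relative compactness of the $f^{-1}(U_i)$, since $f$ need not be proper and preimages of relatively compact sets need not be relatively compact; I expect the key input to be precisely global hyperbolicity of the objects of $\mathbf{Loc}$, which ensures the existence of the relatively compact Alexandrov refinement and is the hypothesis whose failure breaks the equivalence on larger categories.

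Finally I would assemble these pieces: the two refinement statements together with the verified stability axiom show that $\mathcal{A}$ and $\mathbf{RC}$ are mutually subordinate coverages on $\mathbf{Loc}$, and therefore generate the same Grothendieck topology, which is the assertion of Proposition \ref{grothendieckequivalences}.
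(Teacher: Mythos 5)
Your proposal is correct and follows essentially the same route as the paper, whose argument is contained in the discussion immediately preceding the proposition: mutual refinement of the two classes of covers (global hyperbolicity giving relative compactness of Alexandrov intervals in one direction, and the basis property of Alexandrov intervals inside causally convex opens in the other), together with stability under $\mathbf{Loc}$-morphisms via causal convexity of preimages and the existence of relatively compact Alexandrov refinements. Your write-up is in fact somewhat more explicit than the paper's about which axiom of a coverage is being checked where, but the content is the same.
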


\begin{rmk}
	To boot, as shown in \cite{bps}, $\mathbf{RC}_{M}$ defines a \textit{Weiss cover} for $M$: namely, given any finite collection $\{x_{1},\ldots,x_{k}\}$ in $M$, there exists $U \in \mathbf{RC}_{M}$ not equal to all of $M$ which contains these points. A prefactorization algebra is a \textbf{factorization algebra} if it is a cosheaf with respect to every Weiss cover of $M$. Next, we provide an intermediary quality between a prefactorization algebra and a factorization algebra.
\end{rmk}

\begin{defn}[2.11 in \cite{bps}]
	A prefactorization algebra $\mathcal{F} \in \mathbf{PFA}$ is called \textbf{additive} if for every $M \in \mathbf{Loc}$, the canonical morphism $\mathrm{colim}(\mathcal{F}|_{M} : \mathbf{RC}_{M} \to \mathbf{Ch}) \to \mathcal{F}(M)$ is a $\mathbf{Ch}$-isomorphism. $\mathbf{PFA}^{\mathrm{add}}$ will denote the full subcategory of additive prefactorization algebras.
\end{defn}

The idea behind this definition is that the images of the maps $\mathcal{F}(U) \to \mathcal{F}(M)$ ``generate" $\mathcal{F}(M)$ for all $U \in \mathbf{RC}_{M}$. This is not quite as strong as being a factorization algebra, but is useful nonetheless. Moreover, since $\mathbf{RC}_{M}$ is a Weiss cover, we get the following.

\begin{prop}[2.13 in \cite{bps}]
	Every factorization algebra $\mathcal{F}$ on $\mathbf{Loc}$ is an additive prefactorization algebra. 
\end{prop}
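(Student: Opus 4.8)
The plan is to exploit the two structural facts about $\mathbf{RC}_M$ recorded just above the statement: it is a Weiss cover of $M$, and it is a directed poset under inclusion. Fix $M \in \mathbf{Loc}$. Since a factorization algebra is by definition already a prefactorization algebra, the only thing to establish is that the canonical map $\mathrm{colim}(\mathcal{F}|_M : \mathbf{RC}_M \to \mathbf{Ch}) \to \mathcal{F}(M)$ is a $\mathbf{Ch}$-isomorphism. The strategy is to feed the Weiss cover $\mathbf{RC}_M$ into the cosheaf condition defining a factorization algebra, and then to collapse the resulting homotopy colimit of the \v Cech diagram down to the filtered colimit appearing in the definition of additivity.

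First I would record the one additional closure property that makes everything work: $\mathbf{RC}_M$ is closed under finite intersections. If $U_1, U_2 \in \mathbf{RC}_M$ then $U_1 \cap U_2$ is open and relatively compact, and it is causally convex because any causal curve with endpoints in $U_1 \cap U_2$ lies in $U_1$ and in $U_2$ separately, hence in the intersection; since a causally convex open subset of a globally hyperbolic spacetime is again globally hyperbolic, we get $U_1 \cap U_2 \in \mathbf{RC}_M$. Consequently every finite intersection $U_{i_0} \cap \cdots \cap U_{i_n}$ of members of the Weiss cover $\mathbf{RC}_M$ is again a member, so the \v Cech diagram $\check{C}(\mathbf{RC}_M, \mathcal{F})$ of this cover takes all of its values in the restricted functor $\mathcal{F}|_M$.

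Next, since $\mathcal{F}$ is a factorization algebra it is a cosheaf for the Weiss cover $\mathbf{RC}_M$, which means the canonical map $\mathrm{hocolim}\,\check{C}(\mathbf{RC}_M, \mathcal{F}) \to \mathcal{F}(M)$ is a $\mathbf{Ch}$-isomorphism. It then remains to identify this homotopy colimit with $\mathrm{colim}(\mathcal{F}|_M)$, which I would do in two steps. By the previous paragraph the assignment $(U_{i_0}, \ldots, U_{i_n}) \mapsto U_{i_0} \cap \cdots \cap U_{i_n}$ defines a functor $\pi$ from the simplex category of the cover to $\mathbf{RC}_M$, and I would argue $\pi$ is homotopy final: for each $W \in \mathbf{RC}_M$ the comma category of simplices whose intersection contains $W$ is nonempty (it contains the one-element family $(W)$) and directed under concatenation of index families, hence has contractible nerve. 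Finality then yields $\mathrm{hocolim}\,\check{C}(\mathbf{RC}_M, \mathcal{F}) \simeq \mathrm{hocolim}_{\mathbf{RC}_M} \mathcal{F}|_M$. Finally, because $\mathbf{RC}_M$ is directed, hence filtered as a category, and filtered colimits in $\mathbf{Ch}$ are already homotopy colimits, the right-hand side equals the ordinary colimit $\mathrm{colim}(\mathcal{F}|_M)$. Chaining these identifications with the cosheaf isomorphism shows the canonical map $\mathrm{colim}(\mathcal{F}|_M) \to \mathcal{F}(M)$ is a $\mathbf{Ch}$-isomorphism, which is exactly additivity.

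The main obstacle is that middle step: verifying that the intersection functor $\pi$ is homotopy final, so that the \v Cech homotopy colimit genuinely collapses onto the poset-indexed one. The filteredness and intersection-stability of $\mathbf{RC}_M$ are precisely what should force the relevant comma categories to be directed, and thus contractible, but confirming genuine contractibility — rather than mere connectedness — is where the care is needed, and this is the technical content underlying \cite{bps}. Everything else is bookkeeping with colimits.
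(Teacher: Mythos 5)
Your skeleton is essentially the route of the paper's source: the paper offers no independent proof, importing the statement from \cite{bps}, and the proof there is exactly your outline---apply the descent condition to the Weiss cover $\mathbf{RC}_M$ and use that $\mathbf{RC}_M$ is directed and stable under binary intersections (your first paragraph, which is correct and is genuinely needed, modulo the convention for an empty intersection) to collapse the \v{C}ech colimit onto $\mathrm{colim}(\mathcal{F}|_M)$. The real divergence is the categorical register in which you run the collapse. In \cite{bps} both descent and additivity are \emph{strict} $1$-categorical conditions: descent asks for an honest $\mathbf{Ch}$-isomorphism out of the coequalizer of $\bigoplus_{U,V}\mathcal{F}(U\cap V)\rightrightarrows\bigoplus_{W}\mathcal{F}(W)$, and for a directed poset closed under meets the identification of that coequalizer with $\mathrm{colim}(\mathcal{F}|_M)$ is an elementary cofinality computation---no homotopy colimits, no homotopy finality, no exactness of filtered colimits. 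Your homotopified version buys robustness (it would survive a Costello--Gwilliam-style homotopy-cosheaf reading of ``factorization algebra,'' which the present paper's wording admittedly leaves ambiguous), but it proves the wrong \emph{strength} of statement relative to the definitions in play: additivity as defined here demands a strict $\mathbf{Ch}$-isomorphism $\mathrm{colim}(\mathcal{F}|_M)\to\mathcal{F}(M)$, and a chain of identifications through homotopy colimits and quasi-isomorphisms delivers at best an equivalence. So either interpret descent strictly and replace your middle step by the short coequalizer comparison, or accept a conclusion weaker than the proposition as stated.

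On the step you yourself flagged as the crux: ``directed under concatenation'' is not yet an argument, and the comma categories are in fact not obviously filtered---concatenation provides a cocone on any pair of objects, but filteredness also requires coequalizing parallel arrows, which the category of simplices does not visibly do. The clean repair is available: a tuple $(U_{i_0},\dots,U_{i_n})$ satisfies $\bigcap_k U_{i_k}\supseteq W$ if and only if every single $U_{i_k}\supseteq W$, so the comma category over $W$ is precisely the category of simplices of the codiscrete simplicial set on the nonempty set $S_W=\{U\in\mathbf{RC}_M : W\subseteq U\}$, whose nerve is contractible by the standard argument for such ``fat point'' simplicial sets. With that substituted for the directedness claim (and with the observation that filtered colimits of real cochain complexes are exact, so they do compute homotopy colimits), your homotopical argument goes through; measured against the strict framework of \cite{bps} that the paper adopts verbatim, however, the homotopy-colimit detour is simultaneously unnecessary and insufficient.
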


\begin{defn}
	A time-orderable prefactorization algebra $\mathcal{F}$ on $\mathbf{Loc}$ is called \textbf{Cauchy constant} if it satisfies the \textbf{time-slice axiom}: namely, for every Cauchy morphism $f : M \xrightarrow{c} N$, $\mathcal{F}(M) \to \mathcal{F}(N)$ is a quasi-isomorphism. 
\end{defn}

\begin{rmk}
Cauchy constancy is meant to be the analogue of a locally constant prefactorization algebra in the sense of \cite{cosgwill1}, but the analogy cannot be taken too far, because the latter are usually associated to topological field theories. We do not expect Cauchy constant prefactorization algebras to come from topological field theories: indeed, their very definition presupposes the relevance of a Lorentzian metric. Certain physically viable field theories on spacetimes should satisfy the Cauchy constancy axiom. In the next section, we will introduce linearized gravity as an example of a Batalin-Vilkovisky (BV) theory to set up the study of its factorization algebras of observables and see what properties it has.
\end{rmk}

\section{Perturbative Gravity as a BV Theory}\label{bvgravity}

We will refrain from recapping all of the BV formalism: Section 2.1 of \cite{dul} provides a rapid introduction to the BV formalism, and we may cite results from there and \cite{cosgwill2} as needed. That being said, we will specify the definition we use to avoid any ambiguities. Note that $\mathscr{E}$ denotes the smooth sections of $E \to M$, and a precise definition of local functionals $\mathscr{O}_{\mathrm{loc}}(\mathscr{E})$ can be found in Section 2 of \cite{dul}, although heuristically one should think of them as polynomial functions of $\mathscr{E}$ which are of integral form.

\begin{defn}[Definition 2.10 in \cite{dul}]\label{BVdefn}
A \textbf{Batalin-Vilkovisky classical field theory} $(\mathscr{E}, \omega, S)$ on a smooth manifold $M$ is a differential $\mathbf{Z}$-graded vector bundle $E \to M$ equipped with a $-1$-shifted symplectic structure $\omega$ and an action functional $S \in \mathscr{O}_{\mathrm{loc}}(\mathscr{E})$ such that: \\
(1) $S$ satisfies the \textbf{classical master equation} (CME): $\{S,S\} = 0$, and \\
(2) $S$ is at least quadratic, so that it can be written uniquely as $S(\varphi) = \omega(\varphi, Q\varphi) + I(\varphi)$, where $Q$ is a linear differential operator and $I \in \mathscr{O}_{\mathrm{loc}}(\mathscr{E})$ is at least cubic.  \\
A \textbf{free theory} is one in which $I=0$: i.e. the action functional $S$ is purely quadratic.
\end{defn}

\begin{const}\label{BVtheory}
To begin, we must consider the global moduli stack relevant to our example and extract from it the formal moduli problems for perturbative gravity around fixed solutions of the nonlinear equations. As mentioned earlier, for a fixed spacetime $M$, the moduli stack of a generally covariant (that is, a diffeomorphism equivariant) metric theory of gravity is $[\mathrm{Lor}(M)/\mathrm{Diff}^{+}(M)]$, the stack of time-oriented Lorentzian metrics on $M$ modulo time-orientation preserving diffeomorphisms. If we fix a point $g \in [\mathrm{Lor}(M)/\mathrm{Diff}^{+}(M)]$, then in accordance with Section 3 of \cite{dul}, its shifted tangent complex is 
\begin{equation}\label{dgla}
	\mathfrak{K}_{g} := 0 \to \mathrm{Vect}_{M} \xrightarrow{L_{\bullet}g} \Gamma (\mathrm{Sym}^{2}T_{M}^{*})[-1] \to 0.
\end{equation}	
$\mathfrak{K}_{g}$ is an example of a differential graded Lie Algebra (or DGLA).\footnote{A precise definition is in Appendix A of \cite{cosgwill2}. Note that a DGLA is an example of an $L_{\infty}$ algebra, but where only the $\ell_{1}$ and $\ell_{2}$ terms are nontrivial.} This is often referred to as the DGLA of Killing fields, since it cohomologically resolves Killing fields as its kernel.

\begin{rmk}
	For clarity, the differential $\ell_{1}$ in this case is $L_{\bullet}g$, and the Lie bracket $\ell_{2} = [-,-]$ is nontrivial because the vector fields are situated in degree 0. If we did not shift the tangent complex up by one, they would be in degree $-1$, which means $[X,Y]$ would have to land in degree $-2$, which is 0 in this case. Moreover, note that because $[X,-] \in \mathrm{End}(\Gamma(\mathrm{Sym}^{2}T_{M}^{*}))$ is defined to be $L_{X}(-)$, we get that for $X, Y \in \mathrm{Vect}_{M}$,
	\begin{equation}
		L_{[X,Y]}g + [L_{X}g, Y] - [L_{Y}g,X] = L_{[X,Y]}g - L_{Y}L_{X}g + L_{X}L_{Y}g = 0.
	\end{equation}
	This shows us that $L_{\bullet}g$ is indeed a graded derivation of the Lie bracket $[-,-]$ on vector fields, and so confirms that $\mathfrak{K}_{g}$ is indeed a DGLA. 
\end{rmk}

Next, fix a solution $g \in [\mathrm{Lor}(M)/\mathrm{Diff}^{+}(M)]$ to the nonlinear equations of a given vacuum gravitational theory. The formal moduli problem around this point is the shifted cotangent bundle $T^{\vee}[-1]\mathfrak{K}_{g}$ of (\ref{dgla}) which is then ``twisted" by the linearized differential operator, which we denote by $D_{g}$: this is the differential operator defined by linearizing the \textit{nonlinear} differential operator of the original theory around a fixed background solution $g$ of that nonlinear operator. We denote the DG fields as follows:
\begin{equation}\label{BVfields}
	\mathscr{E}_{g} := \mathrm{Vect}_{M} \xrightarrow{L_{\bullet}g} \Gamma(\mathrm{Sym}^{2}T_{M}^{*})[-1] \xrightarrow{D_{g}} \Gamma(\mathrm{Sym}^{2}T_{M})[-2] \xrightarrow{L_{\bullet}g^{*}} {\Omega^{1}_{M}}[-3].
\end{equation}
The degree 2 part $\Gamma(\mathrm{Sym}^{2}T_{M})$ and the degree 3 part ${\Omega^{1}_{M}}$ are chosen so that they pair with the degree 1 and degree $0$ parts, respectively. Given $h_{\mu\nu} \in \Gamma(\mathrm{Sym}^{2}T_{M}^{*})$ and $(h^{\dag})^{\mu\nu} \in \Gamma(\mathrm{Sym}^{2}T_{M})$, the shifted symplectic pairing is: 
\begin{equation}\label{shiftedsymplectic}
 \langle h, h^{\dag} \rangle := \int_{M} h_{\mu\nu}(h^{\dag})^{\mu\nu} \mathrm{vol}_{g},
\end{equation}
so that the action functional for the linearized theory is 
\begin{equation}
	\langle h, D_{g}h \rangle = \int_{M} h_{\mu\nu}(D_{g}h)^{\mu\nu} \mathrm{vol}_{g}.
\end{equation}
The shifted symplectic pairing is defined analogously for the ghosts and antighosts.

The \textit{BV action functional}, which encodes the linearized equations of motion, the action of the vector fields on the fields, and the vector fields on themselves via the Lie bracket, is:
\begin{equation}\label{BV Lagrangian}
	S = \int_{M} (hD_{g}h + h^{\dag}L_{X}g + X^{\dag}[X,X])\mathrm{vol}_{g},
\end{equation}
where $X$ and $X^{\dag}$ are the ghosts and antighosts of the BV theory. The $[X,X]$ term \textit{looks like} it should be $0$, since vector fields self-commute (they don't take values in a Lie algebra in the way that the Yang-Mills ghosts do), but $S$ is in fact a polynomial function of $\mathscr{E}_{g}$: namely, it is in $\mathrm{Sym}(\mathscr{E}_{g}^{\vee}[-1])$.  The shift in degree implies that $[X,X] = XX + XX = 2X^{2}$, so that the term sticks around. It encodes the action of vector fields on themselves and on anti-vector fields in dual degree. More detail on this can be found in Section 3.1 of \cite{bsw}. 

The final term in the differential $\ell_{1}$, which we denote generically by $L_{\bullet}g^{*}$, on the DGLA of fields $\mathscr{E}_{g}$ is natural: it is the $L^{2}$ dual of the Lie derivative, which is the natural divergence operator $\mathrm{div}_{g} : \Gamma(\mathrm{Sym}^{2}T^{*}_{M}) \to \mathrm{Vect}_{M}$. In coordinates it is
$
(L_{\bullet}g^{*})(h) := -2\mathrm{div}_{g}(h) = -2\nabla^{\mu}{h_{\mu}}^{\nu}.
$
For $h^{\dag} \in \Gamma(\mathrm{Sym}^{2}T_{M})$, and to have it land in $\Omega^{1}_{M}$ as we would like, our version has to be 
\begin{equation}
	(L_{\bullet}g^{*})(h^{\dag}) := -2\mathrm{div}_{g}(h^{\dag}) = -2\nabla_{\mu}{(h^{\dag})^{\mu}}_{\nu}.
\end{equation} 
As such, we may also denote it as $-2\mathrm{div}_{g}^{\flat}$ whenever there is ambiguity about tensor type. We may sometimes suppress the factor of $-2$, and note that the musical isomorphisms commute with the covariant derivative because $\nabla g = 0$. Also, note that the kernel of this map imposes what is  called harmonic (or de Donder) gauge for metric perturbations. Finally, that $\mathscr{E}_{g}$ defines a DGLA comes from the fact that $\mathfrak{K}_{g}$ was a DGLA--which we showed above--and we are extending it further by modules of $\mathrm{Vect}_{M}$ in higher degrees.
\end{const}

\begin{rmk}
Ostensibly, any generally covariant vacuum gravity theory should define a perturbative BV theory described by the above data: a differential graded space of fields, a shifted symplectic pairing, and a BV action functional. The diffeomorphism equivariance of the equations of motion guarantees that the fields $\mathscr{E}_{g}$ define a cochain complex. But we need to look at specific models to make meaningful statements. We start with general relativity before shifting gears to conformal gravity to consider some interesting differences. \textbf{Note:} we assume from now on that the dimension of our spacetime $M$ is 4.
\end{rmk}

\subsection{Perturbative GR}\label{einstein} For this model, we use the space of fields $\mathscr{E}_{g}$ from Equation (\ref{BVfields}), and choose $D_{g}$ to be the linearized vacuum Einstein equation $D\mathrm{Ein}_{g}(h) = 0$ at a fixed metric $g$. This is equivalent to $D\mathrm{Ric}_{g}(h) = 0$, where $D\mathrm{Ric}_{g}$ is the linearized Ricci tensor at $g$, for the same reason that the Einstein equations reduce to $\mathrm{Ric}(g) = 0$ in vacuum. To be more specific, we actually choose our operator to be $D\mathrm{Ric}_{g}^{\sharp} := \sharp \circ D\mathrm{Ric}_{g}$ so that its codomain is of the correct tensor type: composing with tensor index raising or lowering does not effect the gauge invariance.

We use the choice of ``transverse traceless" gauge coming from the gauge freedom $h \mapsto h + L_{X}g$: namely, we demand that $\mathrm{div}_{g}h = 0$ and $\mathrm{tr}_{g}h = 0$. There are some nuances to this choice, and we will mention details as needed: for now, we would like to write the equation in a simple form. It is shown in Equation 7.5.23 in \cite{wald} that $D\mathrm{Ric}_{g}(h) = 0$ in this gauge may be written 
\begin{equation}\label{linearizedricci}
	(-\square^{\nabla} + 2\mathrm{Riem})(h) := -\nabla^{\alpha}\nabla_{\alpha}h_{\mu\nu} + 2 {{R^{\alpha}}_{\mu\nu}}^{\beta}h_{\alpha\beta} = 0,\footnote{We suppress the raising of indices between $\mathscr{E}_{g}$ in degrees 1 and 2 unless necessary to deal with directly.}
\end{equation}
where the covariant derivative $\nabla$ and the Riemann tensor are associated to the background metric $g$, and the symbol $\mathrm{Riem}$ denotes the Riemann tensor as an endomorphism of symmetric $(0,2)$ tensors. If the background metric is flat, then the linearized equation reduces to the wave equation $\square h_{\mu\nu} = 0$ for the components of the metric perturbation. 

\begin{rmk}
Even if the background metric is not flat, it must still be Ricci flat, thanks to the vacuum Einstein equations. In other words, the ``source" of the propagation in Equation (\ref{linearizedricci}) becomes the \textit{traceless} part of the Riemann tensor, known as the \textit{Weyl tensor}: a central object of interest in the next section. A coordinate-free expression for the above is in Theorem 12.30 of the wonderful book \cite{besse}.
\end{rmk}

We should mention for reference in later sections that the full form of the preceding equation of motion regardless of a choice of gauge is, according to Equation 2.9 in \cite{musante}, 
\begin{equation}\label{fulllinearizedequation}
	(-\square^{\nabla} + 2\mathrm{Riem} + 2I(L_{\bullet^{\sharp}}g)\mathrm{div}_{g})(Ih) = 0,
\end{equation}
where $L_{\bullet^{\sharp}}g$ sends $\alpha \in \Omega^{1}_{M}$ to $L_{\alpha^{\sharp}}g \in \Gamma(\mathrm{Sym}^{2}T^{*}_{M})$. $I$ denotes the trace-reversal operation $I : h_{\mu\nu} \mapsto h_{\mu\nu} - \frac{1}{2}hg_{\mu\nu}$, where $h = \mathrm{tr}_{g}(h)$. Importantly, $I$ is an involution on $\Gamma(\mathrm{Sym}^{2}T^{*}_{M})$ in dimension 4, and is thus an isomorphism. It commutes with $-\square^{\nabla} + 2\mathrm{Riem}$ (as shown in \cite{musante}), and as such we may suppress it in our notation. It is the final term on the left side which stops the full operator from being Green hyperbolic, which is equivalent to it admitting retarded and advanced Green's operators. However, we are in luck, since the final term is simply a result of the gauge freedom coming from the Lie derivative along $g$, and we can thus ignore it for now.

\begin{prop}\label{einsteinisBV}
	The data $(\mathscr{E}_{g}, S, \omega)$ defined in Construction \ref{BVtheory} in the case of perturbative general relativity around a fixed background Einstein metric $g \in [\mathrm{Lor}(M)/\mathrm{Diff}^{+}(M)]$, where $D_{g}(h) := D\mathrm{Ric}_{g}^{\sharp}(h)$, defines a BV classical field theory.
\end{prop}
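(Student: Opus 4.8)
The plan is to verify the conditions of Definition~\ref{BVdefn}: that $\mathscr{E}_g$ is genuinely a cochain complex of vector bundles, that $\omega$ is a non-degenerate shifted symplectic pairing compatible with the differential, and that the action $S$ of (\ref{BV Lagrangian}) is a local functional, at least quadratic, satisfying the classical master equation. The organizing principle is that $\mathscr{E}_g$ is the shifted cotangent $T^\vee[-1]\mathfrak{K}_g$ of the DGLA $\mathfrak{K}_g$ of (\ref{dgla}), twisted by $D_g = D\mathrm{Ric}_g^\sharp$; since the shifted cotangent of any DGLA carries a canonical invariant pairing and a CME-solving action, the whole problem reduces to checking that the twist by $D_g$ is compatible with this structure.

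First I would verify $d^2 = 0$ for $\mathscr{E}_g$, which amounts to the two relations $D_g\circ L_\bullet g = 0$ and $L_\bullet g^*\circ D_g = 0$. The first is where the Einstein hypothesis enters essentially: differentiating the naturality identity $\mathrm{Ric}(\phi^*g) = \phi^*\mathrm{Ric}(g)$ along the flow of a vector field $X$ gives $D\mathrm{Ric}_g(L_X g) = L_X\mathrm{Ric}(g)$, which vanishes precisely because the background solves the vacuum equation $\mathrm{Ric}(g) = 0$, and applying $\sharp$ preserves this. The second relation then follows formally: $L_\bullet g^*$ is by construction the $L^2$-adjoint of $L_\bullet g$ and $D_g$ is formally self-adjoint, so $(L_\bullet g^*\circ D_g)^* = D_g\circ L_\bullet g = 0$. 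Self-adjointness of $D_g$ I would obtain directly: the gauge-fixed form $-\square^\nabla + 2\mathrm{Riem}$ of (\ref{linearizedricci}) is manifestly self-adjoint, while the remaining gauge term of (\ref{fulllinearizedequation}) has the self-adjoint shape $AA^*$ since $\mathrm{div}_g$ and $L_{\bullet^\sharp}g$ are mutually adjoint and the trace reversal $I$ is a self-adjoint involution commuting with the rest. (Equivalently, $L_\bullet g^*\circ D_g = 0$ is the linearized contracted Bianchi identity at a Ricci-flat background.)

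Next I would check that $\omega$ of (\ref{shiftedsymplectic}) is a non-degenerate shifted symplectic pairing. The metric musical isomorphisms identify $\Gamma(\mathrm{Sym}^2 T^*_M)$ with $\Gamma(\mathrm{Sym}^2 T_M)$ and $\mathrm{Vect}_M$ with $\Omega^1_M$, so pairing the complementary graded pieces and integrating against $\mathrm{vol}_g$ yields a perfect $L^2$-pairing of the required degree and graded symmetry; compatibility with the differential, i.e.\ skew self-adjointness of $Q$, is exactly the self-adjointness of $D_g$ together with the mutual adjointness of $L_\bullet g$ and $L_\bullet g^*$ just established. I would then read off the decomposition required by Definition~\ref{BVdefn}(2): the two quadratic terms $\int_M h D_g h\,\mathrm{vol}_g$ and $\int_M h^\dag L_X g\,\mathrm{vol}_g$ assemble into $\omega(\varphi, Q\varphi)$, while the single cubic term $\int_M X^\dag[X,X]\,\mathrm{vol}_g = \omega(X^\dag, \ell_2(X,X))$ is the at-least-cubic remainder $I(\varphi)$; locality is immediate since each term is the integral of a pointwise polynomial differential expression.

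Finally, the classical master equation. Because $S$ has the cyclic form $\tfrac12\langle\varphi,\ell_1\varphi\rangle + \tfrac{1}{3!}\langle\varphi,\ell_2(\varphi,\varphi)\rangle$ of the DGLA $\mathscr{E}_g$, I would invoke the Costello--Gwilliam correspondence under which $\{S,S\}=0$ is equivalent to the cyclic $L_\infty$ relations; as only $\ell_1,\ell_2$ are nonzero these reduce to (i) $\ell_1^2=0$, (ii) $\ell_1$ a derivation of $\ell_2$, (iii) the Jacobi identity for $\ell_2$, and (iv) invariance of $\omega$ under $\ell_1$ and $\ell_2$. Item (i) is the first step; (iii) is the Jacobi identity for the vector-field bracket together with $L_{[X,Y]}=[L_X,L_Y]$ for the module actions in higher degrees; (ii) extends the derivation computation already carried out for $\mathfrak{K}_g$, using that $D_g$ and $\mathrm{div}_g$ are natural in $g$ and hence Lie-equivariant; and the $\ell_1$-part of (iv) is again the skew-adjointness of $Q$. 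I expect the main obstacle to be the $\ell_2$-part of (iv), the invariance of the $L^2$-pairing under the bracket, since $\mathrm{vol}_g$ is not Lie-invariant ($L_X\mathrm{vol}_g = (\mathrm{div}_g X)\mathrm{vol}_g$); the resolution is that the action on the antighosts and antifields is the density-twisted Lie derivative---equivalently, that these fields live in density-valued dual bundles---for which the pairing is manifestly invariant, since $\int_M L_X\mu = \int_M d\iota_X\mu = 0$ for any compactly supported top-form $\mu$. Assembling (i)--(iv) yields $\{S,S\}=0$ and completes the verification.
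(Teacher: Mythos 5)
Your proposal is correct and follows essentially the same route as the paper: $d^{2}=0$ is obtained from the diffeomorphism covariance identity $D\mathrm{Ric}_{g}(L_{X}g)=L_{X}\mathrm{Ric}(g)$ together with Ricci-flatness of the background and the contracted linearized Bianchi identity, and the classical master equation from the gauge invariance of $S$, which the paper leaves as ``a routine computation.'' Your systematic reduction of the CME to the cyclic DGLA relations --- in particular the observation that invariance of the $L^{2}$-pairing under the bracket requires the density-twisted Lie derivative on the antifields, since $\mathrm{vol}_{g}$ is not Lie-invariant --- supplies detail the paper omits but does not change the argument.
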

\begin{proof}
That the fields $\mathscr{E}_{g}$ define a cochain complex may be manually checked by substituting $h = L_{X}g$ into Equation (\ref{linearizedricci}): this is a quick exercise in unpacking the definition of ${{R^{\alpha}}_{\mu\nu}}^{\beta}$. One may also invoke that $L_{X}\mathrm{Ric}(g) = D\mathrm{Ric}_{g}(L_{X}g)$, so that if $\mathrm{Ric}(g) = 0$, we have the result (keeping in mind that ours is modified to land in contravariant $2$-tensors, but that this does not effect gauge invariance). This and related formulas regarding the gauge invariance of general relativity may be found in Section 1 of \cite{fmm}. Finally, $\mathrm{div}_{g}((-\square^{\nabla} + 2\mathrm{Riem})(h)) = 0$ is simply the contracted linearized Bianchi identity. Letting $S$ be as in (\ref{BV Lagrangian}) with the preceding choice for $D_{g}$ and the shifted symplectic form $\omega$ be $\langle-,-\rangle$ defined in (\ref{shiftedsymplectic}), a routine computation coming from the gauge invariance of the theory under the action of vector fields shows that $\{S,S\} = 0$.
\end{proof}

\begin{rmk}
	It is possible to include higher order terms in perturbation theory to our equations of motion: this would mean incorporating higher $L_{\infty}$ brackets into $\mathscr{E}_{g}$ beyond the differential $\ell_{1}$ and the bracket $\ell_{2} = [-,-]$, which at the moment make $\mathscr{E}_{g}$ a DGLA, as described above. For now, we do not include $\ell_{\geq 3}$, but may include it as needed in future work.
\end{rmk}

\subsection{Perturbative Conformal Gravity}\label{weyl} A leading contender among alternative formulations of gravity is conformal gravity, sometimes called Weyl gravity due to the form of its action functional. An introduction to the theory with a thorough description of its relevance to the dark matter problem is provided in \cite{mannheim}. However, it is also of mathematical interest as a conformal field theory. Its actional functional is 
\begin{equation}
	S_{W} = \int_{M} |W|^{2} \mathrm{vol}_{g} := \int_{M} W_{\mu\nu\alpha\beta}W^{\mu\nu\alpha\beta} \mathrm{vol}_{g},
\end{equation}
where the \textit{Weyl tensor} is defined as the traceless part of the Riemann tensor: $W_{\mu\nu\alpha\beta} := R_{\mu\nu\alpha\beta} + (n-2)^{-1}(R_{\mu\beta}g_{\nu\alpha} - R_{\mu\alpha}g_{\nu\beta} + R_{\nu\alpha}g_{\mu\beta} - R_{\nu\beta}g_{\mu\alpha}) + (n-1)^{-1}(n-2)^{-1}R(g_{\mu\alpha}g_{\nu\beta} - g_{\mu\beta}g_{\nu\alpha})$. 
The Euler-Lagrange equation for this action functional is
\begin{equation}\label{bach}
	B_{\mu\nu} := \nabla^{\alpha}\nabla^{\beta}W_{\alpha\mu\nu\beta} + \frac{1}{2}R^{\alpha\beta}W_{\alpha\mu\nu\beta} = 0,
\end{equation}
where $B_{\mu\nu}$ is the \textit{Bach tensor} and the equation is thus called the Bach equation. In dimension 4, a metric for which $B_{\mu\nu} = 0$ is conformally equivalent to an Einstein metric. A coordinate-free expression for this equation is Equation 4.77 in \cite{besse}.

\begin{rmk}[Conformal Invariance]\label{conformalinvariance}
The Weyl tensor in its $(1,3)$ form ${W^{\alpha}}_{\mu\nu\beta}$ is a conformal invariant of \textit{weight} $0$ in every dimension, in the sense that under a conformal change of metric $g \mapsto e^{2f}g$ for $f \in C^{\infty}(M)$, ${W^{\alpha}}_{\mu\nu\beta}$ remains unchanged. The action functional above is similarly conformally invariant. For $\dim M = 4$, the $(0,2)$ Bach tensor is a conformal invariant of weight $-2$, in the sense that $B(e^{2f}g) = e^{-2f}B(g)$. However, $B$ is not a conformal invariant of any weight in any other dimension. Moreover,  $B$ is divergenceless only in dimension 4. As such, when studying conformal gravity, we always fix $\dim M = 4$. 
\end{rmk}

\begin{rmk}
	It is a short but worthwhile exercise to show that any solution to the nonlinear vacuum Einstein equation $R_{\mu\nu} = 0$ is also a solution to the vacuum Bach equation. In other words, $\mathrm{Crit}(S_{EH}) \hookrightarrow  \mathrm{Crit}(S_{W})$ as loci in the moduli stack  $[\mathrm{Lor}(M)/\mathrm{Diff}^{+}(M)]$, where $S_{EH}$ is the Einstein-Hilbert functional of general relativity. However, the converse is not true: in Section 9 of \cite{mannheim}, the author describes a Schwarzschild-like solution to Equation (\ref{bach}) where the potential has the form $V(r) = 1 - 2br^{-1} + cr$. This spherically symmetric and static solution is \textit{not} a solution to the vacuum Einstein equations. 
\end{rmk}

Due to the conformal symmetries described above, it is necessary for us to include their infinitesimal manifestations in our description of the associated formal moduli problem. Following 12.6.2 in \cite{cosgwill2}, we modify the DGLA $\mathfrak{K}_{g}$ from earlier by adding infinitesimal conformal transformations into degree $0$: 
\begin{equation}
	\mathfrak{L}_{g} := 0 \to \mathrm{Vect}_{M} \oplus C^{\infty}_{M} \xrightarrow{\rho_{g} := L_{\bullet}g + 2(-)g} \Gamma (\mathrm{Sym}^{2}T_{M}^{*})[-1] \to 0.
\end{equation}
The differential is $\rho_{g}(X,f) := L_{X}g + 2fg$. This inherits the DGLA structure from $\mathfrak{K}_{g}$ as long as $[f,h] := fh$ for $h \in \Gamma (\mathrm{Sym}^{2}T_{M}^{*})$ and that $C^{\infty}_{M}$ does not act on $\mathrm{Vect}_{M}$. The constant $2$ is chosen so that the exponentation of the infinitesimal transformation gives back the conformal factor $e^{2f}$.

\begin{lem}
There is a natural inclusion	 $\mathfrak{K}_{g} \hookrightarrow \mathfrak{L}_{g}$ of DGLAs.
\end{lem}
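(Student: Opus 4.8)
The plan is to write down the obvious candidate for the inclusion and then verify the two structural conditions that make it a morphism of DGLAs, namely that it is a cochain map and that it intertwines the brackets $\ell_2$. In degree $0$ I would define $\iota \colon \mathrm{Vect}_M \to \mathrm{Vect}_M \oplus C^\infty_M$ by $X \mapsto (X,0)$, and in degree $1$ I would take $\iota$ to be the identity on $\Gamma(\mathrm{Sym}^2 T_M^*)$. Each component is manifestly injective, so once $\iota$ is shown to respect the DGLA structure it is automatically an inclusion of DGLAs, and the result follows.

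First I would check compatibility with the differentials $\ell_1$. The differential of $\mathfrak{K}_g$ is $L_\bullet g$ and that of $\mathfrak{L}_g$ is $\rho_g$, so I must verify $\rho_g \circ \iota = \iota \circ L_\bullet g$ as maps $\mathrm{Vect}_M \to \Gamma(\mathrm{Sym}^2 T_M^*)$. Evaluating on $X \in \mathrm{Vect}_M$ gives $\rho_g(\iota X) = \rho_g(X,0) = L_X g + 2\cdot 0 \cdot g = L_X g$, which is exactly $\iota(L_X g)$ since $\iota$ is the identity in degree $1$. Hence $\iota$ is a cochain map, the zero conformal factor in the source being precisely what kills the extra $2fg$ term of $\rho_g$.

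Next I would verify that $\iota$ respects the bracket $\ell_2$, which amounts to checking the two bidegrees in which the bracket is nonzero. For two degree-$0$ inputs $(X,0)$ and $(Y,0)$, both conformal components vanish, so the $C^\infty_M$-component of the bracket in $\mathfrak{L}_g$ is zero and $[(X,0),(Y,0)] = ([X,Y],0) = \iota[X,Y]$, reproducing the vector-field bracket of $\mathfrak{K}_g$. For one degree-$0$ and one degree-$1$ input, the bracket in $\mathfrak{L}_g$ is $[(X,f),h] = L_X h + f h$ by the prescriptions $[f,h] := fh$ and the stipulation that $\mathrm{Vect}_M$ acts by $L_\bullet$; setting $f=0$ gives $[(X,0),h] = L_X h = \iota[X,h]$, which matches the bracket of $\mathfrak{K}_g$. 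All remaining brackets vanish for degree reasons, so $\iota$ is a morphism, and hence an inclusion, of DGLAs.

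The one point requiring care, which I would flag as the main thing to get right, is the precise Lie structure on the degree-$0$ piece $\mathrm{Vect}_M \oplus C^\infty_M$ of $\mathfrak{L}_g$: I must confirm that bracketing two elements of the form $(X,0)$ produces no spurious $C^\infty_M$-component, i.e. that $\mathrm{Vect}_M \oplus 0$ is genuinely a Lie subalgebra. This holds because the conformal components of both inputs are zero, so no term can feed a function back into the bracket regardless of whether $\mathrm{Vect}_M$ acts on $C^\infty_M$; the convention that $C^\infty_M$ does not act on $\mathrm{Vect}_M$ guarantees the structure is consistent. Naturality of $\iota$, both in $g$ and with respect to the ambient restriction maps, is then immediate, since every map used is canonical and involves no auxiliary choices.
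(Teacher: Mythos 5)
Your argument is correct, but it takes a genuinely different route from the paper's. The paper's proof works at the level of kernels of the differentials: it observes that $\rho_{g}(X,f)=0$ forces $L_{X}g=-2fg$ with $-2f$ proportional to $\mathrm{div}_{g}(X)$, i.e.\ that $X$ is a conformal Killing field, and then identifies Killing fields as the conformal Killing fields with $\mathrm{div}_{g}(X)=0$. In other words, the paper verifies that the inclusion is compatible with the cohomologically interesting part of the two complexes (Killing fields sit inside conformal Killing fields), which is geometrically illuminating but does not spell out the morphism of DGLAs itself. You instead construct the map explicitly, $X\mapsto(X,0)$ in degree $0$ and the identity in degree $1$, and check the two structural conditions: compatibility with $\ell_{1}$ (where the vanishing conformal factor kills the $2fg$ term of $\rho_{g}$) and compatibility with $\ell_{2}$ (where you correctly flag that the only subtle point is whether $\mathrm{Vect}_{M}\oplus 0$ is a Lie subalgebra of the degree-$0$ piece, which holds for any of the bracket conventions consistent with the paper's prescription since no term can produce a nonzero $C^{\infty}_{M}$-component from two inputs with vanishing conformal part). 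Your verification is the more complete one as a proof that the stated map is an inclusion of DGLAs; the paper's buys the geometric interpretation of the statement --- the relation between Killing and conformal Killing fields --- essentially for free. The two are complementary rather than in conflict.
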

\begin{proof}
	If $\rho_{g}(X,f) := L_{X}g + 2fg = 0$, we have $L_{X}g = -2fg$. In this case, it can be shown that $-2f = \frac{1}{2} \mathrm{div}_{g}(X)$,\footnote{One must be mindful that the constants are a bit different in dimension $n \neq 4$.} and $X$ is called a \textit{conformal Killing field}. Ergo, Killing fields are special cases of conformal Killing fields where $\mathrm{div}_{g}(X) = 0$.
\end{proof}

The formal moduli problem of perturbative conformal gravity around a fixed metric $g$ in the moduli stack $[\mathrm{Lor}(M)/\mathrm{Diff}^{+}(M)]$ such that $B(g) = 0$ is thus $T^{\vee}[-1]\mathfrak{L}_{g}$ twisted by the linearized Bach tensor $DB_{g}$ between the degree 1 and 2 parts. We denote it:
\begin{equation}
	\mathscr{W}_{g} := \mathrm{Vect}_{M} \oplus C^{\infty}_{M} \xrightarrow{\rho_{g}} \Gamma (\mathrm{Sym}^{2}T_{M}^{*})[-1] \xrightarrow{DB_{g}} \Gamma(\mathrm{Sym}^{2}T_{M})[-2] \xrightarrow{\rho_{g}^{*}} {(\Omega^{1}_{M} \oplus C^{\infty}_{M})}[-3].
\end{equation}
Note that the $L^{2}$-dual of $f \mapsto 2fg$ is $h^{\dag} \mapsto 2\mathrm{tr}_{g}(h^{\dag})$, and this gets incorporated into the definition of $\rho_{g}^{*}$. That $\mathscr{W}_{g}$ is a DGLA follows from similar reasons to the case of $\mathscr{E}_{g}$: namely, that we extended $\mathfrak{L}_{g}$ by modules of the action $\rho_{g}$. Most of the proof that the above defines a classical BV theory follows identically to the case of perturbative GR; however, it remains to prove one crucial fact.

\begin{lem}
$\mathrm{im}(\rho_{g}) \subset \ker(DB_{g})$.	
\end{lem}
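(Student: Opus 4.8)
The plan is to establish the inclusion without expanding the fourth-order operator $DB_g$ directly — such a computation, requiring the full linearization of the Bach tensor, would be forbiddingly long — and instead to leverage the two invariance properties of $B$. A general element of $\mathrm{im}(\rho_g)$ has the form $\rho_g(X,f) = L_X g + 2fg$, and since $DB_g$ is linear it suffices to show separately that $DB_g(L_X g) = 0$ and $DB_g(2fg) = 0$. The first will come from diffeomorphism covariance of $B$ together with the background hypothesis $B(g) = 0$, and the second from the weight $-2$ conformal invariance recorded in Remark \ref{conformalinvariance}.

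For the vector-field summand I would reproduce the argument used in Proposition \ref{einsteinisBV}. The Bach tensor is a natural tensor, built functorially from $g$ through its Levi-Civita connection and curvature, hence diffeomorphism covariant: writing $\phi_t$ for the flow of $X$, one has $B(\phi_t^* g) = \phi_t^* B(g)$. Differentiating at $t=0$ and using $\frac{d}{dt}\big|_{t=0}\phi_t^* g = L_X g$, the chain rule gives $DB_g(L_X g) = L_X B(g)$, which vanishes because $B(g) = 0$ on our Bach-flat background.

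For the conformal summand I would apply the identical differentiation trick to the family $g_t := e^{2tf}g$. By the conformal invariance of the $(0,2)$ Bach tensor in dimension $4$, one has $B(g_t) = e^{-2tf}B(g)$. Differentiating at $t=0$ sends the left side to $DB_g(2fg)$, since $\frac{d}{dt}\big|_{t=0} g_t = 2fg$, and the right side to $-2f\,B(g) = 0$. Thus $DB_g(2fg) = 0$, and adding the two contributions gives $DB_g(\rho_g(X,f)) = 0$, which is the claim.

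The mathematical content here is light, so there is no single hard step; the care lies in the bookkeeping. First, I must justify that the differential operator $DB_g$ genuinely computes $\frac{d}{dt}\big|_{t=0} B(g_t)$ for the \emph{nonlinear} families $g_t = \phi_t^* g$ and $g_t = e^{2tf}g$ — this is merely the chain rule, valid because $B$ depends smoothly on the metric and only the first-order term of $g_t$ enters. Second, the operator appearing in $\mathscr{W}_g$ has codomain $\Gamma(\mathrm{Sym}^2 T_M)$, so it incorporates an index raise by the fixed background metric $g$; since raising indices with $g$ commutes with the linearization, the kernel is unchanged and the vanishing persists, exactly as was noted for $D\mathrm{Ric}_g^\sharp$ in the general-relativity case.
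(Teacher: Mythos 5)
Your proof is correct and follows essentially the same route as the paper: split $\rho_g(X,f)$ into its two summands, kill the $L_Xg$ term by diffeomorphism covariance together with $B(g)=0$, and kill the $2fg$ term by differentiating the weight $-2$ conformal transformation law $B(e^{2tf}g)=e^{-2tf}B(g)$ at $t=0$. Your version is slightly more explicit than the paper's (spelling out $DB_g(L_Xg)=L_XB(g)$ and the harmlessness of the index raise), but the argument is the same.
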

\begin{proof}
	This follows from the conformal invariance property of $B$ described in Remark \ref{conformalinvariance}, but we should be careful to show it in some detail. We must check that for any $(X,f) \in \mathrm{Vect}_{M} \oplus C^{\infty}_{M}$, it must be that $DB_{g}(\rho_{g}(X,f)) = DB_{g}(L_{X}g + 2fg) = DB_{g}(L_{X}g) + DB_{g}(2fg) = 0$. 
	
	It follows from the usual diffeomorphism covariance that $DB_{g}(L_{X}g) = 0$, so we must show that $DB_{g}(2fg) = 0$. To do this, we let $\{ e^{2tf} \}_{t \geq 0}$ be a one-parameter group of conformal transformations. Using $B(e^{2f}g) = e^{-2f}B(g)$ from earlier, we see that $B(e^{2tf}g) = e^{-2tf}B(g)$. Computing the derivative with respect to $t$ of both sides and setting $t=0$, we arrive at $DB_{g}(2fg) = -2fB(g)$. However, we assumed that $B(g) = 0$, so that $DB_{g}(2fg) = 0$.
\end{proof}

This proof makes it clear just how important it is to assume Bach-flatness of the fixed background metric. To bring the main proposition of this section home, we must slightly generalize the form of the generic BV action functional in (\ref{BV Lagrangian}). To include the symmetry from the infinitesimal conformal transformations, it becomes
\begin{equation}
	S = \int_{M} (hDB_{g}(h) + h^{\dag}(L_{X}g + 2fg) + X^{\dag}[X,X] + f^{\dag}[f,f])\mathrm{vol}_{g},
\end{equation}
as we would expect. Once again, due to the shifted grading for the dual fields, we have $[f,f] = 2f^{2}$.
Thanks to the preceding lemmas, we may conclude: 

\begin{prop}\label{weylisBV}
	The data $(\mathscr{W}_{g}, S, \omega)$ defined above in the case of perturbative conformal gravity around a fixed background Bach-flat metric $g \in [\mathrm{Lor}(M)/\mathrm{Diff}^{+}(M)]$, where $D_{g}(h) := DB_{g}(h)$, defines a BV classical field theory.
\end{prop}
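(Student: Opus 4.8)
The plan is to verify the three conditions of Definition \ref{BVdefn} for $(\mathscr{W}_g, S, \omega)$, following the template of the proof of Proposition \ref{einsteinisBV} and isolating the places where the conformal sector introduces genuinely new content. The two preceding lemmas do most of the heavy lifting: the inclusion $\mathfrak{K}_g \hookrightarrow \mathfrak{L}_g$ guarantees that the degree-$0$ piece $\mathrm{Vect}_M \oplus C^\infty_M$ carries a DGLA structure extending that of $\mathfrak{K}_g$, and the lemma $\mathrm{im}(\rho_g) \subset \ker(DB_g)$ is exactly the composability $DB_g \circ \rho_g = 0$ needed at the left end of $\mathscr{W}_g$.

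First I would establish that $\mathscr{W}_g$ is a cochain complex, i.e.\ that consecutive differentials compose to zero. The condition $DB_g \circ \rho_g = 0$ is the content of the second lemma. For the dual end, I would argue $\rho_g^* \circ DB_g = 0$ by adjunction: because $B$ is the Euler--Lagrange operator of $S_W = \int_M |W|^2 \mathrm{vol}_g$, its linearization $DB_g$ is the second variation of $S_W$ at $g$ and is therefore formally $L^2$-self-adjoint. Pairing against an arbitrary $(X,f)$ and using self-adjointness together with the lemma, $\langle \rho_g^* DB_g h, (X,f)\rangle = \langle DB_g h, \rho_g(X,f)\rangle = \langle h, DB_g \rho_g(X,f)\rangle = 0$, so that $\rho_g^* DB_g = 0$. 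This is the conformal analogue of the contracted linearized Bianchi identity invoked for GR, reflecting the divergence-freeness of $B$ in dimension $4$ recorded in Remark \ref{conformalinvariance}.

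Second, I would record that the shifted symplectic pairing $\omega$ on $\mathscr{W}_g$ is the evident extension of (\ref{shiftedsymplectic}): the degree-$1$/degree-$2$ pairing is unchanged, while in degree $0$ against degree $3$ the factor $\mathrm{Vect}_M$ pairs with $\Omega^1_M$ as before and the new $C^\infty_M$ in degree $0$ pairs with the $C^\infty_M$ in degree $3$ via $\int_M f\, f^{\dagger}\, \mathrm{vol}_g$; this is precisely the dual datum making $\rho_g^*$ (including its summand $h^{\dagger} \mapsto 2\,\mathrm{tr}_g(h^{\dagger})$) the $L^2$-adjoint of $\rho_g$. Finally, for the classical master equation $\{S,S\}=0$ I would run the same computation as in Proposition \ref{einsteinisBV} applied to the augmented functional, now carrying the extra term $f^{\dagger}[f,f] = 2 f^{\dagger} f^2$. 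The vanishing of $\{S,S\}$ is equivalent to the $L_\infty$/DGLA relations for $\mathscr{W}_g$, which hold because $\mathscr{W}_g$ is obtained from the DGLA $\mathfrak{L}_g$ by adjoining $\rho_g$-modules in higher degree, exactly as $\mathscr{E}_g$ was built from $\mathfrak{K}_g$.

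I expect the main obstacle to be bookkeeping rather than conceptual: checking that every cross-term in $\{S,S\}$ involving the conformal ghost $f$ and its antighost $f^{\dagger}$ cancels, since these mix with the diffeomorphism sector through $\rho_g = L_\bullet g + 2(-)g$. The one place where Bach-flatness of $g$ is genuinely indispensable is the identity $DB_g(2fg) = 0$; without $B(g) = 0$ the conformal direction would fail to be a symmetry and the left end of the complex would not close, so I would flag explicitly that the entire construction is predicated on fixing a Bach-flat background.
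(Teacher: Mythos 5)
Your proposal is correct and follows essentially the same route as the paper, which likewise reduces the claim to the two preceding lemmas (the inclusion $\mathfrak{K}_{g} \hookrightarrow \mathfrak{L}_{g}$ and $\mathrm{im}(\rho_{g}) \subset \ker(DB_{g})$, the latter being where Bach-flatness enters) together with the observation that the remaining checks mirror Proposition \ref{einsteinisBV} verbatim. Your adjunction argument for $\rho_{g}^{*} \circ DB_{g} = 0$, using formal self-adjointness of $DB_{g}$ at a Bach-flat background as the conformal stand-in for the contracted linearized Bianchi identity, cleanly supplies a step the paper leaves implicit.
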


\begin{rmk}
	The exact and somewhat unwieldy form of the  linearized Bach tensor around an arbitrary metric is shown in Equation (43) of \cite{apm}, and simplifications around conformally Einstein and conformally flat metrics are also provided there. The authors show that in transverse traceless gauge and in a conformally flat Einstein background, the linearized Bach equation reduces to $\square^{2} h_{\mu\nu} = 0$, a fourth order equation in the perturbation.\footnote{This is expected, as the Bach equation itself is fourth order in the metric.} With this in mind and with the natural inclusion $\mathfrak{K}_{g} \hookrightarrow \mathfrak{L}_{g}$ of DGLAs, we get that:
\end{rmk}

\begin{prop}\label{BVinclusion}
	When the background metric $g \in [\mathrm{Lor}(M)/\mathrm{Diff}^{+}(M)]$ is assumed to be conformally flat and Einstein, then there is a natural inclusion $\mathscr{E}_{g} \hookrightarrow \mathscr{W}_{g}$ of DGLAs representing BV perturbative general relativity and BV perturbative conformal gravity.
\end{prop}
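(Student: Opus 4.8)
The plan is to construct the inclusion $\Phi : \mathscr{E}_{g} \hookrightarrow \mathscr{W}_{g}$ componentwise in each cohomological degree, reusing the inclusion $\mathfrak{K}_{g} \hookrightarrow \mathfrak{L}_{g}$ of DGLAs already established for the gauge sector (degrees $0$ and $1$) and dualizing it for the antifield sector (degrees $2$ and $3$). Concretely, in degree $0$ I take the direct-summand inclusion $\mathrm{Vect}_{M} \hookrightarrow \mathrm{Vect}_{M} \oplus C^{\infty}_{M}$, $X \mapsto (X,0)$; in degree $1$ the identity on $\Gamma(\mathrm{Sym}^{2}T^{*}_{M})$ (or, if tensor-type conventions demand it, the trace-reversal involution $I$, which is an isomorphism in dimension $4$); in degree $3$ the inclusion $\Omega^{1}_{M} \hookrightarrow \Omega^{1}_{M} \oplus C^{\infty}_{M}$, $\alpha \mapsto (\alpha,0)$; and in degree $2$ a natural differential operator $\Phi_{2} : \Gamma(\mathrm{Sym}^{2}T_{M}) \to \Gamma(\mathrm{Sym}^{2}T_{M})$ to be specified below. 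Each component is manifestly injective, so the resulting chain map is genuinely an inclusion.

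First I would check the square between degrees $0$ and $1$. Since $\rho_{g}(X,0) = L_{X}g + 2\cdot 0 \cdot g = L_{X}g$, the relation $\rho_{g} \circ \Phi_{0} = \Phi_{1} \circ L_{\bullet}g$ holds on the nose; this is exactly the content of the inclusion $\mathfrak{K}_{g} \hookrightarrow \mathfrak{L}_{g}$ proven above, now promoted to the full field complex. Dually, the square between degrees $2$ and $3$ amounts to $\rho_{g}^{*} \circ \Phi_{2} = \Phi_{3} \circ L_{\bullet}g^{*}$, where $\rho_{g}^{*}(h^{\dagger}) = (-2\mathrm{div}_{g}h^{\dagger},\, 2\mathrm{tr}_{g}h^{\dagger})$ and $L_{\bullet}g^{*} = -2\mathrm{div}_{g}$. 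With $\Phi_{3}(\alpha) = (\alpha,0)$ this reduces to the two identities $\mathrm{div}_{g}\circ\Phi_{2} = \mathrm{div}_{g}$ and $\mathrm{tr}_{g}\circ\Phi_{2} = 0$; that is, $\Phi_{2}$ must act trivially modulo the divergence constraint and land in traceless tensors. This forces $\Phi_{2}$ to be the complementary second-order factor of the linearized Bach operator, and is consistent precisely because on a constant-curvature background that factor is assembled from the connection Laplacian and curvature endomorphisms which preserve the transverse-traceless conditions.

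The crux is the middle square $DB_{g}\circ\Phi_{1} = \Phi_{2}\circ D_{g}$, that is, with $\Phi_{1} = \mathrm{id}$, the operator factorization
\begin{equation*}
DB_{g} = \Phi_{2} \circ D\mathrm{Ric}_{g}^{\sharp}
\end{equation*}
of the fourth-order linearized Bach operator through the second-order linearized Einstein operator. This is exactly where conformal flatness and the Einstein condition are indispensable: together they force the background Weyl tensor to vanish and the curvature to be covariantly constant, so that (by the computation recorded in the preceding remark and in \cite{apm}) the linearized Bach operator collapses in transverse-traceless gauge to $\square^{2}$ and, more invariantly, factors into a product of two second-order operators, one of which is precisely $D\mathrm{Ric}_{g}^{\sharp} = -\square^{\nabla} + 2\mathrm{Riem}$ up to a constant. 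I expect establishing this factorization---and checking that its complementary factor is the operator $\Phi_{2}$ demanded by the degree $2$--$3$ square---to be the main obstacle. On a general Einstein but non-conformally-flat background the residual Weyl terms obstruct such a clean factorization, which is why the hypothesis cannot be dropped.

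Finally I would confirm that $\Phi$ is a morphism of DGLAs rather than merely of cochain complexes, i.e. that it intertwines the brackets $\ell_{2}$. Because the bracket on both $\mathscr{E}_{g}$ and $\mathscr{W}_{g}$ is built from the Lie bracket of vector fields in degree $0$ together with the Lie-derivative action of $\mathrm{Vect}_{M}$ on the higher modules, and because $\Phi_{0}$ is the DGLA inclusion $\mathfrak{K}_{g} \hookrightarrow \mathfrak{L}_{g}$ while $\Phi_{1}, \Phi_{2}, \Phi_{3}$ are $\mathrm{Vect}_{M}$-equivariant (being the identity and operators built covariantly from $g$, hence natural), bracket-compatibility reduces to the already-established compatibility for $\mathfrak{K}_{g} \hookrightarrow \mathfrak{L}_{g}$ extended over the module structure. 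The adjoined conformal generators in $C^{\infty}_{M}$ lie outside the image of $\Phi$ and carry only the abelian bracket $[f,f] = 2f^{2}$, so they impose no further constraint. Assembling the three commuting squares with the bracket compatibility then yields the desired inclusion of DGLAs.
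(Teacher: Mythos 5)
Your overall strategy coincides with the one the paper actually uses: Proposition \ref{BVinclusion} is asserted immediately after the remark that on a conformally flat Einstein background the linearized Bach operator collapses to $\square^{2}$ in transverse-traceless gauge, together with the lemma giving $\mathfrak{K}_{g} \hookrightarrow \mathfrak{L}_{g}$, and no further argument is supplied. You have correctly isolated the crux that the paper leaves implicit: the middle square forces an operator factorization $DB_{g} = \Phi_{2} \circ D\mathrm{Ric}_{g}^{\sharp}$ of the fourth-order Bach linearization through the second-order Einstein linearization, and this is exactly where conformal flatness enters and why the hypothesis cannot yet be weakened to merely Einstein. In that respect your write-up is more explicit than the paper's presentation.

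However, your treatment of the square between degrees $2$ and $3$ has a genuine gap. With $\Phi_{3}(\alpha) = (\alpha,0)$ you need the operator identities $\mathrm{div}_{g} \circ \Phi_{2} = \mathrm{div}_{g}$ and $\mathrm{tr}_{g} \circ \Phi_{2} = 0$ on all of $\Gamma(\mathrm{Sym}^{2}T_{M})$. The first cannot hold for a genuinely second-order $\Phi_{2}$: the left-hand side is then a third-order differential operator while the right-hand side is first-order, so they cannot agree as operators. Your justification --- that the complementary factor of $DB_{g}$ ``preserves the transverse-traceless conditions'' --- establishes only that $\Phi_{2}$ maps $\ker(\mathrm{div}_{g}) \cap \ker(\mathrm{tr}_{g})$ into itself, which is strictly weaker than the identity you need on arbitrary sections. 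To close this you must either let $\Phi_{3}$ be a nontrivial second-order operator intertwining $\mathrm{div}_{g}$ with $\Phi_{2}$ (which threatens the injectivity of $\Phi_{3}$, hence the claim that the map is an inclusion on the nose), or settle for commutativity up to homotopy or after restriction to the gauge-fixed subcomplex. The same issue is latent in the paper's unproved assertion, but your proposal states the required identities explicitly and therefore inherits the burden of verifying them.
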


This is not surprising in light of the fact that $\mathrm{Crit}(S_{EH}) \hookrightarrow  \mathrm{Crit}(S_{W})$ as loci in the moduli stack  $[\mathrm{Lor}(M)/\mathrm{Diff}^{+}(M)]$. Indeed, what is surprising is that we could only show the above for a conformally flat and Einstein background metric as opposed to one which is just Einstein. We shall leave it to future work to show that Proposition \ref{BVinclusion} holds in that case, which we fully expect to be true. For now, we set that aside and move on to the final section.

\section{The Factorization Algebras of Observables}\label{factorizationalgebras}

\begin{rmk}
	If the background metric $g$ were Riemannian, the observables of the BV theories--in their full DGLA glory--of the preceding section would already define factorization algebras on the site $\mathbf{Riem}$ of Riemannian manifolds, by Proposition 2.43 of \cite{dul}. However, the Lorentzian structure is essential for our purposes. As such, we now describe which sacrifices must be made (for now) to make sense of the Lorentzian version.
\end{rmk}

Now that we know how BV perturbative gravity works out in the case of traditional general relativity and the more iconoclastic conformal gravity, we will take a look at the observables of the former, which defines a nice factorization algebra on the site $\mathbf{Loc}$ defined in Section \ref{section2} as long as we impose a few constraints: these allow us to invoke the results in \cite{bms}. We must truncate the DGLAs representing the BV theories to their underlying cochain complexes: this means eliminating the terms with $[X,X]$ (and $[f,f]$ in the conformal case) from the BV action functionals. Moreover, we must shift the field spaces down by one, so that the fields are concentrated between degrees $-1$ and 2; however, we will continue to refer to the now \textit{free} BV theories as $\mathscr{E}_{g}$ and $\mathscr{W}_{g}$.

 Our goal is to see how starting with the formal moduli problem definition of a classical BV theory we used in the preceding section and then projecting it to its underlying cochain complex (while retaining the shifted symplectic structure) gives us a free BV theory in the sense of Definition 3.5 of \cite{bms}. This amounts to checking a few properties, foremost among them whether our resulting cochain complex admits a \textit{Green's witness}: a sequence of maps going down in degree which detects whether or not the cochain complex in question is \textit{Green hyperbolic}. From here, we consider two distinct Poisson structures on the linear observables which extend to the symmetric algebra of the linear observables to define a factorization algebra on $\mathbf{Loc}$. 
 
 \subsection{Green's Witnesses and Poisson Structures} 
 
 The cochain complexes $\mathscr{E}_{g}$ and $\mathscr{W}_{g}$ from earlier define complexes of linear differential operators in the sense of Section 3.1 in \cite{bms}:
 
 \begin{defn}
 	A \textbf{complex of linear differential operators} $(F,Q)$ over $M \in \mathbf{Loc}$ consists of a graded vector bundle $F \to M$ and a collection $Q = (Q^{n} : \mathscr{F}(M)^{n} \to \mathscr{F}(M)^{n+1})$ of degree increasing linear differential operators such that $Q^{n+1}Q^{n} = 0$ for all $n \in \mathbf{Z}$, where $\mathscr{F}(M) \in \mathbf{Ch}$ denotes the cochain complex of sections associated with $(F,Q)$. 
 \end{defn}
 
 \begin{rmk}
 Moreover, the shifted symplectic pairing of Equation (\ref{shiftedsymplectic}) is related to what in \cite{bms} is called a \textbf{compatible $(-1)$-shifted fiber metric}: a fiber-wise non-degenerate, graded anti-symmetric, graded vector bundle pairing $(-,-) : F \otimes F \to M \times \mathbf{R}[-1]$ such that 
 \begin{equation}
 	\int_{M} (Q\varphi_{1}, \varphi_{2}) \mathrm{vol}_{g} + (-1)^{|\varphi_{1}|} \int_{M} (\varphi_{1}, Q\varphi_{2}) \mathrm{vol}_{g} = 0
 \end{equation}
 for all homogeneous sections $\varphi_{1}, \varphi_{2} \in \mathscr{F}(M)$ with compact overlapping support. In the guise of (\ref{shiftedsymplectic}), this equation is $\langle Q\varphi_{1}, \varphi_{2} \rangle + (-1)^{|\varphi_{1}|}\langle \varphi_{1}, Q \varphi_{2} \rangle = 0$. This \textit{compatibility condition} implies that $\langle -,- \rangle : \mathscr{F}_{c}(M) \otimes \mathscr{F}(M) \to \mathbf{R}[-1]$ is a cochain map.\footnote{In \cite{bms}, this integration pairing is denoted $((-,-))$.}
 \end{rmk}
 
 That both $\mathscr{E}_{g}$ and $\mathscr{W}_{g}$ satisfy $\langle Q\varphi_{1}, \varphi_{2} \rangle + (-1)^{|\varphi_{1}|}\langle \varphi_{1}, Q \varphi_{2} \rangle = 0$ for the same shifted symplectic pairing is evident from the definition of the underlying cochain complexes. For example, in the simpler case of $\mathscr{E}_{g}$, we see that 
 \begin{equation}\label{Lieadjoint}
 \langle L_{X}g, h^{\dag} \rangle + (-1)^{|X|}\langle X, -2\mathrm{div}_{g}^{\flat}(h^{\dag}) \rangle  = 0,
 \end{equation}
 from the definition of the dual of $L_{\bullet}g$, and that for $h$ and $h'$ in degree $0$, 
 \begin{equation}\label{DRicadjoint}
 \langle D\mathrm{Ric}_{g}(h), h' \rangle + (-1)^{|h|} \langle h, D\mathrm{Ric}_{g}(h') \rangle  = 0
\end{equation}
 by the formal self-adjointness of the linearized Ricci tensor. The case of linearized conformal gravity follows suit, except that $L_{X}g$ is replaced with $L_{X}g + 2fg$, where the dual of the second term satisfies
 \begin{equation}\label{conformaladjoint}
 	\langle 2fg, h^{\dag} \rangle + (-1)^{|f|}\langle f, 2\mathrm{tr}_{g}(h^{\dag}) \rangle = 0.
 \end{equation}
To introduce the \cite{bms} notion of a free BV theory, we introduce one more ingredient. 

\begin{defn}\label{greenswitness}
	A \textbf{(formally self-adjoint) Green's witness} $W = (W^{n})_{n \in \mathbf{Z}}$ for a complex of linear differential operators $(F,Q)$ endowed with a compatible $(-1)$-shifted fiber metric $(-,-)$ consists of a collection of degree decreasing linear differential operators $W^{n} : \mathscr{F}(M)^{n} \to \mathscr{F}(M)^{n-1}$ such that the following conditions hold: 
	(i) for all $n \in \mathbf{Z}$, $P^{n} := Q^{n-1}W^{n} + W^{n+1}Q^{n} : \mathscr{F}(M)^{n} \to \mathscr{F}(M)^{n}$ are Green hyperbolic operators; (ii) $QWW = WWQ$; (iii) for all homogeneous sections $\varphi_{1}, \varphi_{2} \in \mathscr{F}(M)$ with compact overlapping support, $\langle W\varphi_{1}, \varphi_{2}\rangle = (-1)^{|\varphi_{1}|}\langle \varphi_{1}, W\varphi_{2} \rangle$. 
\end{defn}

\begin{rmk}\label{consequences}
Some important consequences of the existence of such a Green's witness are as follows.
(1) For all integers $n$, there exist unique retarded and advanced Green's operators $G^{n}_{\pm} : \mathscr{F}_{c}(M)^{n} \to \mathscr{F}_{c}(M)^{n}$ associated with the Green hyperbolic operators $P^{n}$. (2) We have that $PW = WP$ and $PQ = QP$, and so that $G_{\pm}W = WG_{\pm}$. (3) $P$ is formally self adjoint: $\langle P\varphi_{1}, \varphi_{2} \rangle = \langle \varphi_{1}, P\varphi_{2} \rangle$ for all homogeneous sections $\varphi_{1}, \varphi_{2} \in \mathscr{F}(M)$ with compact overlapping support. Moreover, $\langle G_{\pm}\psi_{1}, \psi_{2} \rangle = \langle \psi_{1}, G_{\mp}\psi_{2} \rangle$ for all $\psi_{1}, \psi_{2} \in \mathscr{F}_{c}(M)$, which implies that the retarded-minus-advanced propagator $G := G_{+} - G_{-}$ and Dirac propagator $G_{D} := \frac{1}{2}(G_{+} + G_{-})$ satisfy $\langle G\psi_{1}, \psi_{2} \rangle = -\langle \psi_{1}, G\psi_{2} \rangle$ and $\langle G_{D}\psi_{1}, \psi_{2} \rangle = \langle \psi_{1}, G_{D}\psi_{2} \rangle$ for all $\psi_{1}, \psi_{2} \in \mathscr{F}_{c}(M)$.
\end{rmk}

With the above in hand, we may introduce the central definition of this section: 

\begin{defn}[3.5 in \cite{bms}]\label{bmsBVtheory} A \textbf{free BV theory} $(F,Q, (-,-),W)$ on $M \in \mathbf{Loc}$ consists of a complex of linear differential operators $(F,Q)$ with a compatible $(-1)$-shifted fiber metric $(-,-)$ and a Green's witness $W$.
\end{defn}

To continue, we reintroduce the differential graded structure of $\mathscr{E}_{g}$, but impose a few changes on top of being clear about the new grading. We make the tensor index lowering and raising explicit by using $\flat$ for index lowering and $\sharp$ for index raising, with both the fixed metric $g$ and the tensor type implicit: for example, if $A$ is a $(k,0)$ tensor, $A^{\flat}$ is a $(0,k)$ tensor. This will be important when introducing our Green's witness. With the above in hand, we reiterate that from now on,
\begin{equation}\label{BVGR}
	\mathscr{E}_{g}:= 0 \to  \mathrm{Vect}_{M}[1] \xrightarrow{L_{\bullet}g} \Gamma(\mathrm{Sym}^{2}T_{M}^{*}) \xrightarrow{D\mathrm{Ric}_{g}^{\sharp}} \Gamma(\mathrm{Sym}^{2}T_{M})[-1] \xrightarrow{-2\mathrm{div}_{g}^{\flat}} {\Omega^{1}_{M}}[-2] \to 0,
\end{equation}
where $\mathrm{div}_{g}^{\flat}:= \flat \circ \mathrm{div}_{g}$ and $D\mathrm{Ric}_{g}^{\sharp} := \sharp \circ D\mathrm{Ric}_{g}$. In addition, we choose the form of $D\mathrm{Ric}_{g}^{\sharp}$ to be $(-\square^{\nabla} + 2\mathrm{Riem} + 2(L_{\bullet^{\sharp}}g)\mathrm{div}_{g})^{\sharp}$, meaning that we reverse the imposition of the transverse traceless gauge from earlier to return to a form equivalent to Equation (\ref{fulllinearizedequation}). 

\begin{rmk}
This compromises the Green hyperbolicity of our initial choice of form for $D_{g}$, but allows us to bear witness to key cancellations which make $\mathscr{E}_{g}$ a Green hyperbolic complex: i.e., a cochain complex with a Green's witness.  Using the complete linearized Ricci tensor of Equation (\ref{fulllinearizedequation}) and choosing a Green's witness $W$ susses out the underlying Green hyperbolic operator of linearized gravity as $P = QW + WQ$. Thus, it is an advantage to leave $Q$ in terms which at first seem to break Green hyperbolicity, since the cohomological setup and the choice of a Green's witness $W$ identifies the underlying Green hyperbolic operator $P$ of the theory in a natural  way.
\end{rmk}

\begin{prop}\label{greenswitnessGR}
	A formally self-adjoint Green's witness for $\mathscr{E}_{g}$ as defined in (\ref{BVGR}) is 
	\begin{equation}
		0 \xleftarrow{} \mathrm{Vect}_{M}[1] \xleftarrow{-2\mathrm{div}_{g}^{\sharp}} \Gamma(\mathrm{Sym}^{2}T_{M}^{*}) \xleftarrow{\flat} \Gamma(\mathrm{Sym}^{2}T_{M})[-1] \xleftarrow{L^{\sharp}_{\bullet^{\sharp}}g} {\Omega^{1}_{M}}[-2] \xleftarrow{} 0,
	\end{equation} 
	where the part from degree 2 to degree 1 sends $\alpha \in \Omega^{1}_{M}$ to $(L_{\alpha^{\sharp}}g)^{\sharp}$. 
\end{prop}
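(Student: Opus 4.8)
The plan is to verify the three conditions of Definition \ref{greenswitness} for the proposed $W$, in increasing order of difficulty. Abbreviate the differentials of (\ref{BVGR}) as $Q^{-1} = L_\bullet g$, $Q^0 = D\mathrm{Ric}_g^\sharp$, $Q^1 = -2\mathrm{div}_g^\flat$, and the witness as $W^0 = -2\mathrm{div}_g^\sharp$, $W^1 = \flat$, $W^2 = L^\sharp_{\bullet^\sharp}g$ (with all other components zero). Since $\mathscr{E}_g$ is concentrated in degrees $-1$ through $2$, each condition amounts to finitely many computations, and the substance lies in computing the operators $P^n = Q^{n-1}W^n + W^{n+1}Q^n$ for condition (i).

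I would dispose of condition (iii) first. The identity $\langle W\varphi_1, \varphi_2\rangle = (-1)^{|\varphi_1|}\langle\varphi_1, W\varphi_2\rangle$ splits, by the degree of $\varphi_1$, into three checks: pairing $W^0$ against $W^2$ is precisely the integration-by-parts adjointness between $-2\mathrm{div}_g$ and $L_\bullet g$ already recorded in (\ref{Lieadjoint}); the self-pairing of $W^1 = \flat$ is the manifest symmetry of index lowering with respect to $\langle-,-\rangle$; and the graded signs are bookkeeping forced by the degree conventions and graded antisymmetry of the shifted pairing (\ref{shiftedsymplectic}). Condition (ii), $QWW = WWQ$, I would then check componentwise: the only nonzero twofold composites of $W$ are $W^0W^1 = -2\mathrm{div}_g^\sharp\circ\flat$ and $W^1W^2 : \alpha\mapsto L_{\alpha^\sharp}g$, and matching $QWW$ against $WWQ$ degree by degree reduces to two available facts. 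First, the complex identities $Q^0Q^{-1} = 0$ and $Q^1Q^0 = 0$ for $\mathscr{E}_g$ (established in the proof of Proposition \ref{einsteinisBV}) annihilate the compositions that factor through them. Second, because $\nabla g = 0$, the musical isomorphisms commute with $\mathrm{div}_g$, which identifies $(\mathrm{div}_g (h^{\dag})^\flat)^\sharp$ with $\mathrm{div}_g h^{\dag}$ and thereby equates the two surviving maps in degree $1$.

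The heart of the matter is condition (i), and here the reason for reverting in (\ref{BVGR}) to the non-gauge-fixed operator $D\mathrm{Ric}_g^\sharp = (-\square^\nabla + 2\mathrm{Riem} + 2(L_{\bullet^\sharp}g)\mathrm{div}_g)^\sharp$ becomes apparent. In the middle degrees the gauge term cancels: since $W^1Q^0 = \flat\circ\sharp\circ D\mathrm{Ric}_g = D\mathrm{Ric}_g$ and $Q^{-1}W^0 = -2L_{(\mathrm{div}_g h)^\sharp}g$, the contribution $+2(L_{\bullet^\sharp}g)\mathrm{div}_g$ hidden inside $D\mathrm{Ric}_g$ is exactly cancelled, leaving $P^0 = -\square^\nabla + 2\mathrm{Riem}$ as in (\ref{linearizedricci}); the analogous cancellation gives the contravariant $P^1 = -\square^\nabla + 2\mathrm{Riem}$ on $\Gamma(\mathrm{Sym}^2 T_M)$. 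Both are normally hyperbolic, hence Green hyperbolic by the standard theory invoked in \cite{bmsgreen}. The outer operators are $P^{-1} = W^0Q^{-1} = -2\mathrm{div}_g^\sharp\, L_\bullet g$ on $\mathrm{Vect}_M$ and $P^2 = Q^1W^2 = -2\mathrm{div}_g^\flat\, L^\sharp_{\bullet^\sharp}g$ on $\Omega^1_M$, which the self-adjointness of $W$ shows to be the same operator transported by the musical isomorphisms.

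The main obstacle is establishing Green hyperbolicity of these outer operators, since, unlike $P^0$ and $P^1$, the operator $-2\mathrm{div}_g L_\bullet g$ is \emph{not} normally hyperbolic: on the Ricci-flat background it reduces to $X \mapsto -2(\square^\nabla X + \nabla\,\mathrm{div}_g X)$, whose principal symbol is a multiple of $|\xi|^2\,\mathrm{Id} + \xi\otimes\xi^\flat$ rather than of the identity. Its characteristic set is nevertheless exactly the lightcone, and I would prove Green hyperbolicity by decoupling the longitudinal and transverse sectors: applying $\mathrm{div}_g$ shows that $\mathrm{div}_g X$ satisfies a scalar wave equation (modulo curvature terms of lower order), which fixes $\mathrm{div}_g X$ with prescribed advanced or retarded support, after which $X$ itself solves a normally hyperbolic equation whose unique Green's operators can be built from the scalar and tensor wave Green's operators. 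This yields the unique retarded and advanced Green's operators demanded of a Green hyperbolic operator; alternatively, one may cite the analysis of precisely this gauge-sector operator in \cite{musante} and \cite{bms}. Assembling the four degrees then completes the verification that the displayed $W$ is a formally self-adjoint Green's witness for $\mathscr{E}_g$.
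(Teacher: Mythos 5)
Your proposal is correct and follows the same overall skeleton as the paper's proof (dispatch (iii) via the adjointness identities already recorded in (\ref{Lieadjoint})--(\ref{DRicadjoint}), check (ii) degree by degree using the contracted linearized Bianchi identity and the fact that the musical isomorphisms commute with $\nabla$, and locate the substance in the computation of the $P^{n}$, where the gauge term $2(L_{\bullet^{\sharp}}g)\mathrm{div}_{g}$ inside $D\mathrm{Ric}_{g}^{\sharp}$ cancels against $Q^{-1}W^{0}$ to leave $P^{0}=P^{1}=-\square^{\nabla}+2\mathrm{Riem}$). Where you genuinely diverge is in the outer degrees. The paper simply asserts that $P^{-1}=(-2\mathrm{div}_{g}^{\sharp})(L_{\bullet}g)$ and $P^{2}$ \emph{are} the d'Alembertians on $\mathrm{Vect}_{M}$ and $\Omega^{1}_{M}$, citing Lemma 2.1.13 of \cite{musante}, and concludes Green hyperbolicity immediately; you instead compute that $\mathrm{div}_{g}(L_{X}g)=\square^{\nabla}X+\nabla\,\mathrm{div}_{g}X+\mathrm{Ric}(X)$, observe that the principal symbol is $|\xi|^{2}\mathrm{Id}+\xi\otimes\xi^{\sharp}$ rather than a multiple of the identity, and supply a longitudinal/transverse decoupling argument to get Green hyperbolicity without normal hyperbolicity. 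Your computation is right as the operators are literally written in (\ref{BVGR}): the identity $\mathrm{div}_{g}\circ K=\square$ invoked from \cite{musante} holds for the de Donder operator $\mathrm{div}_{g}\circ I$, i.e.\ only after inserting the trace reversal $I$, which the paper's witness component $W^{1}=\flat$ does not contain (and which cannot enter $P^{-1}=W^{0}Q^{-1}$ in any case). So your route either patches a real imprecision in the paper's step (i) or signals that the witness should be taken as $I\circ\flat$ in degree $1$ together with matching adjustments elsewhere; either way the extra argument you give for the outer operators is not redundant, and the rest of your verification matches the paper's.
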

\begin{proof}
	Let us start with point (i) of Definition \ref{greenswitness}. Note that $P^{-1} = (-2\mathrm{div}_{g}^{\sharp})(L_{\bullet}g)$ and $P^{2} = (-2\mathrm{div}_{g}^{\flat})(L^{\sharp}_{\bullet^{\sharp}}g)$ are the d'Alembertians $\square$ on $\mathrm{Vect}_{M}$ and $\Omega^{1}_{M}$, respectively, as seen in Lemma 2.1.13 of \cite{musante}. Thus, they are Green hyperbolic. Next, we have $P^{0} = (L_{\bullet}g)(-2\mathrm{div}_{g}^{\sharp}) + (\flat)(D\mathrm{Ric}_{g}^{\sharp})$. By expanding $\flat \circ D\mathrm{Ric}_{g}^{\sharp} = \flat \circ (-\square^{\nabla} + 2\mathrm{Riem} + 2(L_{\bullet^{\sharp}}g)\mathrm{div}_{g})^{\sharp}$ and keeping track of tensor types, we see that this reduces to $P^{0} = -\square^{\nabla} + 2\mathrm{Riem}$ acting on $\Gamma(\mathrm{Sym}^{2}T^{*}_{M})$, which is a normally hyperbolic operator, and thus Green hyperbolic. Finally, $P^{1} = (D\mathrm{Ric}_{g}^{\sharp})(\flat) + (L^{\sharp}_{\bullet^{\sharp}}g)(-2\mathrm{div}_{g}^{\flat})$. By the same expansion with a mirrored computation regarding tensor types, $P^{1}$ also reduces to $-\square^{\nabla} + 2\mathrm{Riem}$, but this time acting on $\Gamma(\mathrm{Sym}^{2}T_{M})$, so it is Green hyperbolic in a similar way.
	
	To show (ii), we must check if the degree $-1$ maps $QWW$ and $WWQ$ agree in each degree. That the degree $0$ to degree $-1$ parts agree follows from the fact that $WWQ = (-2\mathrm{div}_{g}^{\sharp})(\flat)(D\mathrm{Ric}_{g}^{\sharp}) = 0$, as it is equivalent to the contracted linearized Bianchi identity, and $QWW = 0$ identically. From degree 1 to 0, $WWQ = (\flat)(L_{\bullet^{\sharp}}^{\sharp}g)(-2\mathrm{div}_{g}^{\flat})$ and $QWW = (L_{\bullet}g)(-2\mathrm{div}_{g}^{\sharp})(\flat)$ are equivalent after some tensor type bookkeeping. The degree 2 to degree 0 mapping is mirrored from the degree 0 to degree $-1$ case, so we omit it, and the rest of the instances are identically 0. 
	
	Finally, we remark that (iii) follows by similar reasoning to what was shown in Equations (\ref{Lieadjoint}) and (\ref{DRicadjoint}), so we omit that repetition also. Since our choice of $W$ satisfies properties (i) through (iii) outlined in Definition \ref{greenswitness}, it defines a Green's witness, as desired.
\end{proof}

\begin{cor}
	The choice of $\mathscr{E}_{g}$ from (\ref{BVGR}) in addition to the choice of Green's witness from Proposition \ref{greenswitnessGR} defines a free BV theory in the sense of Definition \ref{bmsBVtheory}.
\end{cor}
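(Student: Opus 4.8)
The plan is to unwind Definition \ref{bmsBVtheory}: a free BV theory is the package of (a) a complex of linear differential operators $(F,Q)$, (b) a compatible $(-1)$-shifted fiber metric $(-,-)$, and (c) a Green's witness $W$. Two of these three data are already in hand --- the Green's witness is exactly the content of Proposition \ref{greenswitnessGR}, and the integrated compatibility condition was recorded in Equations (\ref{Lieadjoint})--(\ref{DRicadjoint}) --- so the corollary is essentially a matter of confirming that the remaining fiberwise structures are genuinely of the required type and then invoking the definition. I would therefore proceed datum by datum.

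First I would read off $F$ from (\ref{BVGR}) as the graded vector bundle whose sections form $\mathrm{Vect}_M[1]\oplus\Gamma(\mathrm{Sym}^2 T_M^*)\oplus\Gamma(\mathrm{Sym}^2 T_M)[-1]\oplus\Omega^1_M[-2]$, concentrated in degrees $-1$ through $2$, with $Q$ given by the three maps $L_\bullet g$, $D\mathrm{Ric}_g^\sharp$, and $-2\mathrm{div}_g^\flat$. Each of these is manifestly a degree-raising linear differential operator, so the only substantive point is $Q^2=0$; but this is precisely the statement that $\mathscr{E}_g$ is a cochain complex, established in Proposition \ref{einsteinisBV}, and it reduces to the two identities $D\mathrm{Ric}_g^\sharp\circ L_\bullet g = 0$ (diffeomorphism covariance, $D\mathrm{Ric}_g(L_X g)=L_X\mathrm{Ric}(g)=0$ on the Ricci-flat background) and $-2\mathrm{div}_g^\flat\circ D\mathrm{Ric}_g^\sharp=0$ (the contracted linearized Bianchi identity). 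The regrading and the insertion of $\sharp,\flat$ do not affect the vanishing of these composites.

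Next I would exhibit the compatible $(-1)$-shifted fiber metric underlying (\ref{shiftedsymplectic}) as an honest vector-bundle pairing $(-,-):F\otimes F\to M\times\mathbf{R}[-1]$, not merely its integrated form. The pairing couples degree $0$ with degree $1$ by the contraction $h\otimes h^\dagger\mapsto h_{\mu\nu}(h^\dagger)^{\mu\nu}$ and couples degree $-1$ with degree $2$ by the duality pairing $\mathrm{Vect}_M\otimes\Omega^1_M\to\mathbf{R}$, extended to the swapped slots so as to be graded antisymmetric; both contractions are fiberwise perfect pairings of a bundle with its dual, hence nondegenerate. Because the paired degrees multiply to an even number in each case, graded antisymmetry becomes the genuine relation $(\varphi_1,\varphi_2)=-(\varphi_2,\varphi_1)$, which I would build into the normalization on the two degree-swapped slots. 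With the fiberwise pairing so fixed, the cochain-map compatibility $\langle Q\varphi_1,\varphi_2\rangle+(-1)^{|\varphi_1|}\langle\varphi_1,Q\varphi_2\rangle=0$ is exactly what (\ref{Lieadjoint}) and (\ref{DRicadjoint}) verify, using the formal self-adjointness of $-\square^\nabla+2\mathrm{Riem}$ and the adjunction between $L_\bullet g$ and $-2\mathrm{div}_g$.

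Finally, Proposition \ref{greenswitnessGR} supplies a formally self-adjoint Green's witness $W$ for $(F,Q)$, completing the tuple $(F,Q,(-,-),W)$, which by Definition \ref{bmsBVtheory} is a free BV theory. I expect no deep obstacle here, since the analytic heart of the matter was already discharged in Proposition \ref{greenswitnessGR}; the one step requiring genuine care is the bookkeeping in the previous paragraph that cleanly separates the fiberwise $(-1)$-shifted metric (fiberwise nondegeneracy together with the correct graded-antisymmetry sign on the degree-$0$/degree-$1$ and degree-$(-1)$/degree-$2$ slots) from the integrated pairing, so that the hypotheses of Definition \ref{bmsBVtheory} are met on the nose rather than merely up to the already-checked integral identity.
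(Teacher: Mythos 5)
Your proposal is correct and follows the same route the paper intends: the corollary is stated without proof precisely because it amounts to assembling the three data of Definition \ref{bmsBVtheory} from what has already been established --- the cochain complex structure from Proposition \ref{einsteinisBV}, the compatible $(-1)$-shifted fiber metric from Equations (\ref{Lieadjoint})--(\ref{DRicadjoint}), and the Green's witness from Proposition \ref{greenswitnessGR}. Your extra care in distinguishing the fiberwise pairing from its integrated form is a reasonable elaboration but not a departure from the paper's argument.
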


Central results of \cite{bms} now allow us to define a few Poisson structures on the observables, which then define analogous factorization algebras: for the sake of brevity, we refer to Sections 3.1 and 3.2 of that paper, but give a recap of the main ideas. We begin by noting that the preceding results imply, thanks to \cite{bmsgreen}, the existence of \textit{retarded/advanced Green's homotopies} $\Lambda_{\pm} := WG_{\pm} = G_{\pm}W \in [\mathscr{E}_{g,c}, \mathscr{E}_{g}]^{-1}$, where $[V,W]^{n}$ denotes the internal hom cochain complex $\prod_{p\in \mathbf{Z}} \mathrm{Hom}(V^{p}, W^{n+p})$. These trivialize the map $j : \mathscr{E}_{g,c} \to \mathscr{E}_{g}$ forgetting compact supports in the sense that $\partial \Lambda_{\pm} := Q\Lambda_{\pm} + \Lambda_{\pm}Q = QWG_{\pm} + WG_{\pm}Q = PG_{\pm} = j$. 

We further define the \textit{retarded-minus-advanced cochain map} $\Lambda := \Lambda_{+} - \Lambda_{-} = WG : \mathscr{E}_{g,c}[1] \to \mathscr{E}_{g}$ and the \textit{Dirac homotopy} $\Lambda_{D} := WG_{D} : \mathscr{E}_{g,c}[1] \to \mathscr{E}_{g}$. This also satisfies $\partial \Lambda_{D} = j$. The \textit{linear observables} in this case are identified with $\mathscr{E}_{g,c}[1]$, the compactly supported DG fields shifted down by 1. On the linear observables we define a $(-1)$-shifted Poisson structure $\tau_{(-1)} : \mathscr{E}_{g,c}[1] \to \mathbf{R}[1]$ by stating that for all homogeneous $\psi_{1},\psi_{2} \in \mathscr{E}_{g,c}[1]$,
\begin{equation}
	\tau_{(-1)}(\psi_{1} \otimes \psi_{2}) := \int_{M} (\psi_{1},\psi_{2})\mathrm{vol}_{g},
\end{equation}
where $(-,-)$ denotes the natural pairing between the two sections. The above suppresses a few details from the full definition given in Equation 3.13 of \cite{bms}, but it is sufficient for us. Note that $\tau_{(-1)}$ is antisymmetric. Similarly, the \textit{unshifted Poisson structure} $\tau_{(0)} : \mathscr{E}_{g,c}[1] \to \mathbf{R}$ is
\begin{equation}
	\tau_{(0)}(\psi_{1} \otimes \psi_{2}) := \int_{M}(\psi_{1}, WG \psi_{2})\mathrm{vol}_{g}
\end{equation}
and the \textit{Dirac pairing} $\tau_{D} : \mathscr{E}_{g,c}[1] \to \mathbf{R}$ (which is also unshifted) is 
\begin{equation}
	\tau_{D}(\psi_{1} \otimes \psi_{2}) := \int_{M}(\psi_{1}, WG_{D}\psi_{2})\mathrm{vol}_{g}.
\end{equation}
Both of these are symmetric. It is also worth noting that $\tau_{D}$ trivializes $\tau_{(-1)}$: $\partial \tau_{D} = \tau_{(-1)}$. 

\begin{rmk}
Definition 2.4 and Remark 2.5 of \cite{bms} explain how a $p$-shifted linear Poisson structure on $\mathscr{E}_{g,c}[1]$ may be extended to a $p$-shifted Poisson bracket on $\mathrm{Sym}(\mathscr{E}_{g,c}[1])$. We skip over details, but mention that the Poisson bracket $\{-,-\}_{(-1)}$ defined by extending $\tau_{(-1)}$ is the usual antibracket one sees in the BV formalism corresponding to the $P_{0}$ algebra structure of \cite{cosgwill2}. The bracket $\{-,-\}_{(0)}$ defined from $\tau_{(0)}$ is the Peierls bracket, which defines a $P_{1}$ algebra structure, and only arises in special instances of local constancy in one dimension: namely, time evolution in a Lorentzian background. We continue by describing the functoriality of these constructions.
\end{rmk}

\begin{defn}[Definition 3.10 in \cite{bms}]
	A \textbf{natural collection of free BV theories} $(F_{M}, Q_{M}, (-,-)_{M}, W_{M})_{M \in \mathbf{Loc}}$ consists of natural vector bundles $\mathsf{F}^{n}$, natural linear differential operators $Q^{n} : \Gamma(\mathsf{F}^{n}) \to \Gamma(\mathsf{F}^{n+1})$ and $W^{n} : \Gamma(\mathsf{F}^{n}) \to\Gamma(\mathsf{F}^{n-1})$, and natural fiber metrics $(-,-)^{n} : \mathsf{F}^{n} \otimes \mathsf{F}^{1-n} \to \mathbf{R}$, such that for all $M \in \mathbf{Loc}$, $(F_{M}, Q_{M}, (-,-)_{M}, W_{M})$ is a free BV theory. 
\end{defn}

\begin{rmk}\label{timeevolution}
	A summary of what naturality means here is in Appendix A of \cite{bms}, but the central idea is that a \textit{natural vector bundle} is a functor $\mathsf{F} : \mathbf{Loc} \to \mathbf{VectBun}$ and that a \textit{natural linear differential operator} is therefore a natural transformation $P : \Gamma(\mathsf{E}) \to \Gamma(\mathsf{F})$, whose components are thus linear differential operators $P_{M} : \Gamma(E_{M}) \to \Gamma(F_{M})$ for all $M \in \mathbf{Loc}$. 
\end{rmk}

Some key consequences of naturality/functoriality of free BV theories are as follows. For all $f : M \to N$ in $\mathbf{Loc}$, we have: (1) a pushforward cochain map $f_{*} : \mathscr{F}_{c}(M) \to \mathscr{F}_{c}(N)$ and a pullback cochain map $f^{*} : \mathscr{F}(N) \to \mathscr{F}(M)$, both in $\mathbf{Ch}$; (2) naturality of the integration pairing, so that 

\[\begin{tikzcd}
	{\mathscr{F}_{c}(M) \otimes \mathscr{F}(N)} &&& {\mathscr{F}_{c}(M) \otimes \mathscr{F}(M)} \\
	\\
	{\mathscr{F}_{c}(N) \otimes \mathscr{F}(N)} &&& {\mathbf{R}[-1]}
	\arrow["{\mathrm{id} \otimes f^{*}}", from=1-1, to=1-4]
	\arrow["{f_{*} \otimes \mathrm{id}}"', from=1-1, to=3-1]
	\arrow["{\langle -,- \rangle_{M}}", from=1-4, to=3-4]
	\arrow["{\langle -,- \rangle_{N}}"', from=3-1, to=3-4]
\end{tikzcd}\]
commutes; and (3) naturality of Green's witnesses: $W_{N}f_{*} = f_{*}W_{M}$ and $W_{M}f^{*} = f^{*}W_{N}$. Naturality of the remaining constructions--$P = QW + WQ$, $\Lambda_{\pm}$, etc.--follow from this. We can now reiterate a central result which follows from all of the above. 

\begin{thm}[Theorem 3.13 in \cite{bms}]\label{linearcauchyconstancy}
	Let $(F_{M}, Q_{M}, (-,-)_{M}, W_{M})_{M \in \mathbf{Loc}}$ be a natural collection of free BV theories. Then $\tau_{(0)}$ satisfies \textbf{Einstein causality}: namely, for all causally disjoint morphisms $f_{1} : M_{1} \to N \leftarrow M_{2} : f_{2}$ in $\mathbf{Loc}$, $\tau_{(0)} \circ (f_{1} \otimes f_{2}) = 0$. Moreover, for every Cauchy morphism $f : M \xrightarrow{c} N$, $f_{*} : \mathscr{F}_{c}(M)[1] \to \mathscr{F}_{c}(N) [1]$ is a quasi-isomorphism in $\mathbf{Ch}$. 
\end{thm}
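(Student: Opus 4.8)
The plan is to treat the two assertions separately, since they rest on different mechanisms: Einstein causality on the support properties of the retarded-minus-advanced propagator, and the time-slice statement on well-posedness of the Cauchy problem packaged homologically. Naturality of all the constructions (items (1)--(3) preceding the theorem) is what makes $f_{*}$ and $f^{*}$ intertwine $W$, $G_{\pm}$, and the pairing $\langle-,-\rangle$, and is used silently throughout.

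For Einstein causality I would argue directly from supports. Given causally disjoint $f_{1} : M_{1} \to N$ and $f_{2} : M_{2} \to N$ and homogeneous $\psi_{i} \in \mathscr{F}_{c}(M_{i})[1]$, unwind $\tau_{(0)} \circ (f_{1} \otimes f_{2})(\psi_{1} \otimes \psi_{2}) = \int_{N} (f_{1*}\psi_{1}, WG f_{2*}\psi_{2})\mathrm{vol}_{g}$. Since $W$ is a differential operator it does not enlarge support, while $G = G_{+} - G_{-}$ satisfies $\mathrm{supp}(G\varphi) \subseteq J^{+}_{N}(\mathrm{supp}\,\varphi) \cup J^{-}_{N}(\mathrm{supp}\,\varphi)$ by the support properties of the advanced and retarded Green's operators recorded in Remark \ref{consequences}. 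Hence $\mathrm{supp}(WG f_{2*}\psi_{2}) \subseteq J^{+}_{N}(f_{2}(M_{2})) \cup J^{-}_{N}(f_{2}(M_{2}))$, which is disjoint from $f_{1}(M_{1}) \supseteq \mathrm{supp}(f_{1*}\psi_{1})$ precisely by causal disjointness of the images. The integrand vanishes identically, so the product is zero in a few lines once the support estimate is in place.

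The time-slice assertion is the substantial one, and here the plan is to exploit the retarded/advanced Green's homotopies $\Lambda_{\pm} = WG_{\pm}$ together with a cutoff adapted to the Cauchy morphism. Fix two Cauchy surfaces $\Sigma_{-} \subset J^{-}_{N}(\Sigma_{+})$ of $N$, both contained in the causally convex image $f(M)$, and a function $\chi \in C^{\infty}(N)$ equal to $1$ to the future of $\Sigma_{+}$ and to $0$ to the past of $\Sigma_{-}$. For a $Q$-cocycle $\psi \in \mathscr{F}_{c}(N)$ one computes, using $\partial\Lambda_{\pm} = j$ and $Q\psi = 0$, the identity $\psi = Q\bigl(\chi\Lambda_{+}\psi + (1-\chi)\Lambda_{-}\psi\bigr) - [Q,\chi]\Lambda\psi$, where $[Q,\chi]$ denotes the commutator of $Q$ with multiplication by $\chi$. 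The correction term $[Q,\chi]\Lambda\psi$ is the crux: its coefficients are supported in the slab $J^{+}_{N}(\Sigma_{-}) \cap J^{-}_{N}(\Sigma_{+}) \subseteq f(M)$ where $d\chi$ lives, while $\Lambda\psi = WG\psi$ is supported in $J^{+}_{N}(K) \cup J^{-}_{N}(K)$ for $K = \mathrm{supp}\,\psi$ compact; by global hyperbolicity the intersection of such a slab with $J^{\pm}_{N}(K)$ is compact, so $[Q,\chi]\Lambda\psi$ is a compactly supported cocycle lying inside $f(M)$, hence in the image of $f_{*}$. This produces the candidate inverse on cohomology.

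The main obstacle --- where I expect most of the work to concentrate --- is that the homotopy $\chi\Lambda_{+}\psi + (1-\chi)\Lambda_{-}\psi$ is only spacelike-compact, not compactly supported, so the displayed identity establishes the equivalence $[\psi] = [-[Q,\chi]\Lambda\psi]$ a priori only in the cohomology of all sections, not in the compactly supported cohomology that $f_{*}$ actually sees (indeed $\partial\Lambda_{\pm} = j$ shows the inclusion $j$ is null-homotopic, so that coarser statement carries no information). To close this gap I would route the argument through the fundamental theorem for Green-hyperbolic complexes of \cite{bmsgreen}: the cochain map $\Lambda = WG$ carries $\mathscr{F}_{c}(M)[1]$ into the spacelike-compact sections $\mathscr{F}_{sc}(M)$ and induces an isomorphism from $H^{\bullet}(\mathscr{F}_{c}(M)[1])$ onto the cohomology of $(\mathscr{F}_{sc}(M), Q)$; for the latter, restriction of data to a neighborhood of a Cauchy surface is an isomorphism by well-posedness of the Cauchy problem for the Green-hyperbolic operators $P^{n} = Q^{n-1}W^{n} + W^{n+1}Q^{n}$. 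The remaining bookkeeping is to promote these degreewise statements to the full complex --- using condition (ii) $QWW = WWQ$ of the Green's witness to guarantee that $\Lambda$, the $P^{n}$, and $f_{*}$ all commute with $Q$ --- and to check that $f_{*}$ corresponds under $\Lambda$ to the Cauchy-surface restriction, so that it inherits the isomorphism in every degree.
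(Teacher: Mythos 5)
This statement is imported verbatim as Theorem 3.13 of \cite{bms}; the paper offers no proof of its own, so your proposal can only be measured against the standard argument in \cite{bms} and \cite{bmsgreen}. Your first half, Einstein causality, is correct and is exactly that argument: $W$ does not enlarge supports, $\mathrm{supp}(G_{\pm}\varphi)\subseteq J^{\pm}_{N}(\mathrm{supp}\,\varphi)$, and causal disjointness of the images (a symmetric relation, as you implicitly use) makes the supports of $f_{1*}\psi_{1}$ and $WGf_{2*}\psi_{2}$ disjoint.

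The time-slice half contains a concrete error that manufactures the ``main obstacle'' you then labor to circumvent: you have paired the cutoff $\chi$ (equal to $1$ to the \emph{future} of $\Sigma_{+}$) with the \emph{retarded} homotopy $\Lambda_{+}$, whose output is supported in $J^{+}_{N}(K)$, so that $\chi\Lambda_{+}\psi$ is supported in the future-unbounded set $J^{+}_{N}(K)\cap J^{+}_{N}(\Sigma_{-})$; likewise $(1-\chi)\Lambda_{-}\psi$ is past-unbounded. The correct pairing is the opposite one: with $v:=\chi\Lambda_{-}\psi+(1-\chi)\Lambda_{+}\psi$ one has $\mathrm{supp}(\chi\Lambda_{-}\psi)\subseteq J^{-}_{N}(K)\cap J^{+}_{N}(\Sigma_{-})$ and $\mathrm{supp}((1-\chi)\Lambda_{+}\psi)\subseteq J^{+}_{N}(K)\cap J^{-}_{N}(\Sigma_{+})$, both \emph{compact} by global hyperbolicity, and the same commutator computation gives $\psi=Qv+[Q,\chi]\Lambda\psi$ with $v\in\mathscr{F}_{c}(N)$. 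Every compactly supported cocycle is therefore cohomologous, already in compactly supported cohomology, to one supported in the slab between $\Sigma_{\pm}$ inside the causally convex image $f(M)$; the detour through spacelike-compact sections and the well-posedness machinery of \cite{bmsgreen} is unnecessary. Even after this repair your argument only establishes surjectivity of $H^{\bullet}(f_{*})$: for injectivity you must also show that if $f_{*}\psi=Q\eta$ with $\eta\in\mathscr{F}_{c}(N)$, then a compactly supported primitive exists inside $f(M)$. This follows by applying the same cutoff identity to the non-cocycle $\eta$ (the correction terms $\chi\Lambda_{-}Q\eta+(1-\chi)\Lambda_{+}Q\eta+[Q,\chi]\Lambda\eta$ are compactly supported in $f(M)$ because $Q\eta=f_{*}\psi$ is supported there and $f(M)$ is causally convex), but it is a separate step that your proposal does not address.
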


Since we know how functoriality over $\mathbf{Loc}$ plays out for \textit{linear} observables, we want to consider how it plays out in the polynomial case, so we introduce the following. 

\begin{defn}
	The \textbf{classical observables} of a free BV theory $(F,Q, (-,-), W)$ on $M$ are defined as $\mathrm{Obs^{cl}}(M) = \mathrm{Sym}(\mathscr{F}_{c}(M)[1] )$ with differential $\mathscr{Q}$ defined below.
\end{defn}

\begin{rmk}
The observables $\mathrm{Obs^{cl}}(M) = \mathrm{Sym}(\mathscr{F}_{c}(M)[1] )$ inherit the differential $Q_{[1]} = -Q$ defined on $\mathscr{F}_{c}(M)[1]$ by demanding that the Leibniz rule holds on the symmetric algebra. We will refer to this differential on $\mathrm{Obs^{cl}}(M)$ as $\mathscr{Q}$, and it satisfies $\mathscr{Q}^{2} = 0$, which is a form of the \textit{classical master equation}.  Naturality also applies in this case, and so we may view the observables as a functor $\mathrm{Obs^{cl}} : \mathbf{Loc} \to \mathbf{Ch}$. In this case, the input is $U \in \mathbf{Loc}$ and the output is $(\mathrm{Sym}(\mathscr{F}_{c}(U)[1] ), \mathscr{Q})$, which we view as an object in $\mathbf{Ch}$. We may also restrict to a given $M \in \mathbf{Loc}$ and consider the functor $\mathrm{Obs}^{\mathrm{cl}}_{M} : \mathbf{RC}_{M} \to \mathbf{Ch}$ defined analogously.
	
\end{rmk}

With this, we can make some analogous statements for those in the case of linear observables.

\begin{prop}[Adapted from Proposition 4.2 in \cite{bms}] Let $\underline{f} : \underline{M} \to N$ be a time-orderable $n$-tuple in $\mathbf{Loc}$. Then the time-ordered product 
\[\begin{tikzcd}
	{\mathrm{Obs}^{\mathrm{cl}}(\underline{M})} &&&& {\mathrm{Obs}^{\mathrm{cl}}(N)} \\
	\\
	&& {\mathrm{Obs}^{\mathrm{cl}}(N)^{\otimes n}}
	\arrow["{\mathrm{Obs}^{\mathrm{cl}}(\underline{f})}", dashed, from=1-1, to=1-5]
	\arrow["{\bigotimes_{i} f_{i *}}"', from=1-1, to=3-3]
	\arrow["{\mu^{(n)}}"', from=3-3, to=1-5]
\end{tikzcd}\]
is a cochain map: i.e., $\mathscr{Q} \circ\mathrm{Obs}^{\mathrm{cl}}(\underline{f}) = \mathrm{Obs}^{\mathrm{cl}}(\underline{f}) \circ \mathscr{Q}$. Here, $f_{i *}$ denotes the symmetric algebra extension of the pushforward cochain map $f_{i*} : \mathscr{F}_{c}(M_{i})[1] \to \mathscr{F}_{c}(N)[1]$ and $\mu^{(n)}$ denotes the $n$-ary multiplication on the symmetric algebra $\mathrm{Sym}(\mathscr{F}_{c}(N)[1]) \in \mathbf{dgCAlg}$.	
\end{prop}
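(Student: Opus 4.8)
The plan is to observe that the stated time-ordered product is literally the composite $\mathrm{Obs}^{\mathrm{cl}}(\underline{f}) = \mu^{(n)} \circ \bigl(\bigotimes_{i} f_{i *}\bigr)$ of two maps each of which I will show is a cochain map; the conclusion then follows at once because a composite of cochain maps is a cochain map. I emphasize from the outset that, because $\mathrm{Obs}^{\mathrm{cl}}(N) = \mathrm{Sym}(\mathscr{F}_{c}(N)[1])$ is \emph{graded-commutative}, the value of this composite does not depend on the chosen ordering of the tuple $\underline{f}$; hence the time-orderability hypothesis plays no role in the cochain-map property itself, and the content is purely homological algebra resting on the naturality results recorded above.

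First I would record that each linear pushforward $f_{i *} : \mathscr{F}_{c}(M_{i})[1] \to \mathscr{F}_{c}(N)[1]$ is already a cochain map. This is consequence (1) of the naturality of our free BV theory and is the linear statement underlying Theorem \ref{linearcauchyconstancy}: since $Q$ is a natural linear differential operator and $f_{i}$ is an isometric open embedding, locality of $Q$ forces $Q_{N} \circ f_{i *} = f_{i *}\circ Q_{M_{i}}$ with no boundary contribution, and the shift by $1$ replaces both differentials by $-Q$ without disturbing the identity.

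Next I would promote this to the symmetric algebras. The map denoted $f_{i *} : \mathrm{Obs}^{\mathrm{cl}}(M_{i}) \to \mathrm{Obs}^{\mathrm{cl}}(N)$ is the unique algebra homomorphism extending the linear pushforward, while $\mathscr{Q}$ on either side is the unique derivation extending $Q_{[1]} = -Q$ via the graded Leibniz rule. Both composites $\mathscr{Q}_{N} \circ f_{i *}$ and $f_{i *}\circ \mathscr{Q}_{M_{i}}$ are then $f_{i *}$-derivations, i.e. they satisfy the same twisted Leibniz identity relative to the algebra homomorphism $f_{i *}$; since two $f_{i *}$-derivations that agree on the generators $\mathscr{F}_{c}(M_{i})[1]$---where agreement is exactly the linear cochain-map identity of the previous step---must agree on all of $\mathrm{Obs}^{\mathrm{cl}}(M_{i})$, each $f_{i *}$ is a cochain map. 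Taking the tensor product over $i$, and matching it against the tensor-product differential on $\mathrm{Obs}^{\mathrm{cl}}(\underline{M})$, the map $\bigotimes_{i} f_{i *}$ is a cochain map landing in $\mathrm{Obs}^{\mathrm{cl}}(N)^{\otimes n}$ with its tensor-product differential.

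Finally I would observe that $\mu^{(n)}$ is a cochain map: since $(\mathrm{Sym}(\mathscr{F}_{c}(N)[1]), \mathscr{Q}) \in \mathbf{dgCAlg}$, its multiplication obeys $\mathscr{Q}(ab) = \mathscr{Q}(a)\,b + (-1)^{|a|} a\,\mathscr{Q}(b)$, which is precisely the assertion that the binary product intertwines the tensor-product differential with $\mathscr{Q}$; iterating gives the same for $\mu^{(n)}$. Composing the two cochain maps yields $\mathscr{Q}\circ \mathrm{Obs}^{\mathrm{cl}}(\underline{f}) = \mathrm{Obs}^{\mathrm{cl}}(\underline{f})\circ \mathscr{Q}$, as claimed. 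The only genuine obstacle is bookkeeping: the shift $[1]$, the sign in $Q_{[1]} = -Q$, and the Koszul signs in the tensor-product differentials must be tracked consistently; but since every map in sight is of degree $0$ and $\mathscr{Q}$ is an odd derivation, these signs cancel as expected, and no analytic input beyond the already-established naturality of the pushforward is needed.
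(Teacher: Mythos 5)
Your proof is correct and is essentially the paper's argument: the paper's own proof simply defers to the first line of the proof of Proposition 4.2 in \cite{bms}, which is precisely the observation you spell out in full, namely that $\bigotimes_{i} f_{i*}$ and $\mu^{(n)}$ are each cochain maps (the former because the symmetric-algebra extension of a linear cochain map intertwines the derivation-extended differentials, the latter because $\mathscr{Q}$ is a graded derivation), so their composite is one. Your side remark that time-orderability contributes nothing to the cochain-map verification in the classical case is also accurate.
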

The proof of this proposition follows immediately from the first line of the proof of Proposition 4.2 in \cite{bms}, since that was shown for the quantized case and the above is in the classical case.

\begin{prop}[Adapted from Proposition 4.3 in \cite{bms}]
	If $f : M \xrightarrow{c} N$ in $\mathbf{Loc}$ is a Cauchy morphism, then $\mathrm{Obs}^{\mathrm{cl}}(f) : \mathrm{Obs}^{\mathrm{cl}}(M) \to \mathrm{Obs}^{\mathrm{cl}}(N)$ is a quasi-isomorphism in $\mathbf{Ch}$. 
\end{prop}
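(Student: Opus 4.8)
The plan is to deduce this from the linear statement already in hand, namely that $f_{*} : \mathscr{F}_{c}(M)[1] \to \mathscr{F}_{c}(N)[1]$ is a quasi-isomorphism for a Cauchy morphism $f$, which is the second half of Theorem \ref{linearcauchyconstancy}. The point is that passing from linear observables to $\mathrm{Obs}^{\mathrm{cl}} = \mathrm{Sym}(\mathscr{F}_{c}(-)[1])$ is a purely formal operation that preserves quasi-isomorphisms, because we work over the field $\mathbf{R}$ (characteristic zero) and the \emph{classical} differential $\mathscr{Q}$ carries no quantum (BV Laplacian) correction.

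First I would exploit the polynomial grading. Since $\mathscr{Q}$ is the Leibniz extension of the linear differential $Q_{[1]} = -Q$, it acts on a monomial by differentiating one tensor factor at a time, so it preserves the symmetric word-length. Hence each complex splits as a direct sum of subcomplexes
\[
	\mathrm{Obs}^{\mathrm{cl}}(M) = \bigoplus_{n \geq 0} \mathrm{Sym}^{n}\big(\mathscr{F}_{c}(M)[1]\big),
\]
and $\mathrm{Obs}^{\mathrm{cl}}(f) = \bigoplus_{n} \mathrm{Sym}^{n}(f_{*})$ respects this splitting. Because cohomology commutes with direct sums of cochain complexes of vector spaces, it suffices to prove that each $\mathrm{Sym}^{n}(f_{*})$ is a quasi-isomorphism.

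Then I would reduce $\mathrm{Sym}^{n}$ to tensor powers. Over $\mathbf{R}$ we may realize $\mathrm{Sym}^{n}(V)$ as the image of the symmetrization idempotent $e = \frac{1}{n!}\sum_{\sigma \in S_{n}} \sigma$ acting (with Koszul signs) on $V^{\otimes n}$. The key inputs are: (a) the $n$-fold tensor power $f_{*}^{\otimes n}$ is a quasi-isomorphism, by the K\"unneth theorem, which holds without Tor terms since we are over a field; and (b) $f_{*}^{\otimes n}$ is $S_{n}$-equivariant and therefore commutes with the chain map $e$. Given (a) and (b), $f_{*}^{\otimes n}$ restricts to a quasi-isomorphism between the images of $e$, because the idempotent splits each tensor power as a direct sum of subcomplexes and $H^{\bullet}(e V^{\otimes n}) = e\,H^{\bullet}(V^{\otimes n})$; an isomorphism on cohomology commuting with an idempotent restricts to an isomorphism on the corresponding summands. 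This gives the claim for each $n$, and hence for $\mathrm{Obs}^{\mathrm{cl}}(f)$.

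The genuinely nontrivial analytic content---that the linear pushforward is a quasi-isomorphism---is entirely supplied by Theorem \ref{linearcauchyconstancy}, so the remaining work is formal. The main point requiring care is the functional-analytic meaning of $\mathrm{Sym}$ and $\otimes$: the sections $\mathscr{F}_{c}(M)$ are nuclear topological vector spaces, so $\otimes$ should be the completed tensor product and one must invoke the \emph{topological} K\"unneth theorem valid in the convenient/nuclear setting used throughout \cite{bms} and \cite{cosgwill2}. Granting the conventions there, steps (a) and (b) go through unchanged, and tracking the Koszul signs so that $e$ is a genuine chain map commuting with $f_{*}^{\otimes n}$ is routine. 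I expect the topological K\"unneth bookkeeping to be the only real obstacle; everything else is the standard fact that symmetric powers preserve quasi-isomorphisms in characteristic zero.
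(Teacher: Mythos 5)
Your proposal is correct and follows essentially the same route as the paper: both reduce the statement to the linear quasi-isomorphism of Theorem \ref{linearcauchyconstancy} together with the fact that the free CDGA functor $\mathrm{Sym}$ preserves quasi-isomorphisms. The only difference is in how that last fact is justified --- the paper cites the model-categorical observation that every object of $\mathbf{Ch}$ is fibrant and cofibrant, whereas you prove it by hand via the word-length splitting, the K\"unneth theorem over $\mathbf{R}$, and the symmetrization idempotent in characteristic zero, which is a valid and more elementary argument for the same step.
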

The proof here follows from the proof of the result in the original paper; however, there is an even easier proof. Since every object in $\mathbf{Ch}$ is both fibrant and cofibrant, the free CDGA functor $\mathrm{Sym}$ preserves weak equivalences, and so this proposition follows immediately from Theorem \ref{linearcauchyconstancy}.\footnote{Thank you to Alex Schenkel for pointing this out!} By all of our above work, we showed that linearized gravity $(\mathscr{E}_{g}, Q_{g}, (-,-)_{g}, W_{g})$ defines a natural collection of BV theories. As such, we may invoke the preceding propositions to state the following.

\begin{cor}\label{observablesforGR}
	The observables $\mathrm{Obs}^{\mathrm{cl}}(-, \mathscr{E})$ of linearized general relativity define a time-orderable and Cauchy constant prefactorization algebra on $\mathbf{Loc}$.
\end{cor}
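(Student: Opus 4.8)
The plan is to present the corollary as the assembly of the two functorial propositions just proved, applied to the natural collection of free BV theories $(\mathscr{E}_{g}, Q_{g}, (-,-)_{g}, W_{g})_{M \in \mathbf{Loc}}$ that our Green's-witness computation has produced. On objects the prefactorization algebra is the functor sending $M \in \mathbf{Loc}$ to the cochain complex $(\mathrm{Sym}(\mathscr{E}_{g,c}(M)[1]), \mathscr{Q})$, which provides datum (i) of Definition \ref{PFA}. For datum (ii), given a time-ordered tuple $\underline{f} : \underline{M} \to N$, I would declare the factorization product to be the cochain map $\mu^{(n)} \circ \bigl( \bigotimes_{i} f_{i*} \bigr)$ supplied by the adapted Proposition 4.2 of \cite{bms}, and then extend to an arbitrary time-orderable tuple by precomposing with a time-ordering permutation, exactly as in Definition \ref{timeorderedPFA}.

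First I would check the three axioms of Definition \ref{PFA} in the time-orderable setting. Axiom (2) is immediate, since $(\mathrm{id}_{M})_{*} = \mathrm{id}$ and $\mu^{(1)} = \mathrm{id}$. Axiom (3), equivariance under $S_{n}$, reduces to the graded commutativity of $\mathrm{Sym}(\mathscr{E}_{g,c}(N)[1])$: permuting the tensor factors by $\sigma$ alters $\mu^{(n)} \circ \bigl( \bigotimes_{i} f_{i*} \bigr)$ only by the Koszul sign of $\sigma$, which is precisely the compatibility demanded. Axiom (1), the operadic composition law, is where the bookkeeping sits: unwinding $\underline{f}(\underline{g}_{1},\ldots,\underline{g}_{n})$ and using strict functoriality of the pushforward, $g_{*}f_{*} = (gf)_{*}$, together with associativity of $\mu$, shows that pushing each $\mathscr{E}_{g,c}(L_{ij})[1]$ forward to $N$ and multiplying gives the same map whether or not one factors through the intermediate products over the $M_{i}$. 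That each map produced commutes with the differential $\mathscr{Q}$ is exactly the content of the adapted Proposition 4.2.

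For Cauchy constancy (the time-slice axiom) I would invoke the adapted Proposition 4.3 of \cite{bms} directly: for every Cauchy morphism $f : M \xrightarrow{c} N$ the induced map $\mathrm{Obs}^{\mathrm{cl}}(M) \to \mathrm{Obs}^{\mathrm{cl}}(N)$ is a quasi-isomorphism in $\mathbf{Ch}$. As noted when that proposition was stated, the cleanest route is through Theorem \ref{linearcauchyconstancy}, which makes $f_{*}$ a quasi-isomorphism on the linear observables $\mathscr{E}_{g,c}(-)[1]$, after which one uses that the free CDGA functor $\mathrm{Sym}$ preserves weak equivalences in $\mathbf{Ch}$.

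The step I expect to demand the most care is the well-definedness of the time-ordered product on a time-orderable tuple, i.e. its independence of the chosen time-ordering permutation, so that datum (ii) genuinely depends only on $\underline{f}$. Two time-ordering permutations differ by a reordering of mutually time-unordered factors, and one must verify that the corresponding composites agree. In the quantized theory of \cite{bms} this is the delicate point and leans on Einstein causality; in our classical setting, however, it collapses to the graded commutativity of $\mathrm{Sym}(\mathscr{E}_{g,c}(N)[1])$, across which $\mu$ is symmetric up to Koszul sign. I would therefore transport the \cite{bms} assembly wholesale, the only substantive input being the naturality and Green's-witness facts already established above.
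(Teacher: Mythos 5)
Your proposal is correct and follows essentially the same route as the paper, which simply observes that linearized gravity forms a natural collection of free BV theories and then invokes the two adapted propositions (4.2 and 4.3 of \cite{bms}) to conclude. The only difference is that you spell out the routine axiom checks and the well-definedness of the time-ordered product on time-orderable tuples, which the paper leaves implicit; your observation that permutation-independence reduces to graded commutativity of $\mathrm{Sym}$ in the classical setting is accurate.
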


\subsection{Conjectures} Now that we have assembled all of the ingredients for further study and shown that they satisfy the hypotheses necessary to fit in to the framework of the papers \cite{bps, bmsgreen, bms}, we shall outline a few conjectures which we strongly expect to be true. First, we mention that although the Green hyperbolicity of the underlying cochain complex for conformal gravity $\mathscr{W}_{g}$ remains an outstanding question, we expect it to hold and that the following is thus true.

\begin{conj}
	The observables $\mathrm{Obs}^{cl}(-, \mathscr{W})$ of linearized conformal gravity define a time-orderable and Cauchy constant prefactorization algebra on $\mathbf{Loc}$. Moreover, there is an inclusion $\mathrm{Obs}^{cl}(-, \mathscr{E}) \hookrightarrow \mathrm{Obs}^{cl}(-, \mathscr{W})$ of the observables of linearized general relativity into the observables of linearized conformal gravity.
\end{conj}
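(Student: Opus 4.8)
The plan is to run, for $\mathscr{W}_g$, the exact argument that produced Corollary \ref{observablesforGR} for $\mathscr{E}_g$, and then to obtain the inclusion by pushing Proposition \ref{BVinclusion} through the $\mathrm{Sym}$ functor. The first assertion reduces entirely to showing that the truncated complex $\mathscr{W}_g$ is a free BV theory in the sense of Definition \ref{bmsBVtheory} and that these assemble into a natural collection over $\mathbf{Loc}$. The compatible $(-1)$-shifted fiber metric is the one recorded in Equations (\ref{Lieadjoint})--(\ref{conformaladjoint}), where the only change from the GR case is that the middle-degree compatibility now uses the formal self-adjointness of the linearized Bach operator $DB_g$ (which holds because $B$ is the variational gradient of $S_W$) in place of that of $D\mathrm{Ric}_g$; naturality of $\rho_g$, $DB_g$ and $\rho_g^*$ over $\mathbf{Loc}$ is manifest since each is assembled from $g$ and $\nabla$. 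Granting the Green's witness, Theorem \ref{linearcauchyconstancy} and the two propositions adapted from \cite{bms} then deliver time-orderability and Cauchy constancy just as for $\mathscr{E}_g$. Everything therefore hinges on producing a formally self-adjoint Green's witness for $\mathscr{W}_g$.

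Constructing that witness is the crux, and I would attempt it only over a conformally flat Einstein background, the regime in which the remark after Proposition \ref{weylisBV} records that $DB_g$ collapses to $\square^2$ on transverse-traceless perturbations. The obstacle, absent in Proposition \ref{greenswitnessGR}, is that $DB_g$ is fourth order, so the operators $P^n = Q^{n-1}W^n + W^{n+1}Q^n$ of Definition \ref{greenswitness}(i) cannot all be normally hyperbolic. I would instead aim for the middle operators $P^0$ and $P^1$ to have principal symbol that of $\square^2$ -- morally $(-\square^\nabla + 2\mathrm{Riem})^2$, the square of the GR operator -- and invoke the fact from \cite{bmsgreen} that a composition of Green hyperbolic operators is again Green hyperbolic. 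The natural candidate is the witness of Proposition \ref{greenswitnessGR}, extended over the additional $C^\infty_M$ summands by the $L^2$-duals $\mathrm{tr}_g$ and $2(-)g$ of $f \mapsto 2fg$, but augmented in the middle by the second-order factor needed to absorb the two extra derivatives of the Bach operator. Deciding precisely which factor of $\square^2$ the middle witness component must carry, so that $P^{-1}$ and $P^2$ remain the d'Alembertians on $\mathrm{Vect}_M \oplus C^\infty_M$ and $\Omega^1_M \oplus C^\infty_M$ while $P^0$ and $P^1$ emerge as genuinely Green hyperbolic fourth-order operators, is the delicate step; one then verifies condition (ii) $QWW = WWQ$, whose top component is a higher-order Bianchi-type cancellation, and the graded self-adjointness (iii), which reduces to Equations (\ref{Lieadjoint}) and (\ref{conformaladjoint}) together with the self-adjointness of the inserted second-order factor.

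For the inclusion I would feed Proposition \ref{BVinclusion} -- valid in precisely the conformally flat Einstein setting in which the witness lives -- into the $\mathrm{Sym}$ functor. That proposition furnishes a degreewise-injective cochain map $\mathscr{E}_g \hookrightarrow \mathscr{W}_g$, natural in $M \in \mathbf{Loc}$ because it is the inclusion of natural bundles $\mathrm{Vect}_M \hookrightarrow \mathrm{Vect}_M \oplus C^\infty_M$ and $\Omega^1_M \hookrightarrow \Omega^1_M \oplus C^\infty_M$ in the ghost and antighost degrees. Passing to compactly supported sections, shifting by one, and applying $\mathrm{Sym}$ -- which carries an injection of graded vector spaces to an injection, since $\mathrm{Sym}(V)$ is a direct summand of $\mathrm{Sym}(V \oplus C)$ -- yields a degreewise injection $\mathrm{Obs}^{cl}(-,\mathscr{E}) \hookrightarrow \mathrm{Obs}^{cl}(-,\mathscr{W})$ commuting with the differentials $\mathscr{Q}$. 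Since this map commutes with the extension-by-zero pushforwards $f_*$ (by naturality over $\mathbf{Loc}$) and with the symmetric-algebra multiplications $\mu^{(n)}$ (being a map of CDGAs), it intertwines the time-ordered products $\mu^{(n)} \circ \bigotimes_i f_{i*}$ and is thus a morphism in $\mathbf{PFA}$, establishing the claimed inclusion.

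The decisive obstacle is the Green's witness for $\mathscr{W}_g$: away from conformally flat Einstein backgrounds the linearized Bach operator is the unwieldy expression of \cite{apm}, which I do not expect to factor through Green hyperbolic operators, so both halves of the conjecture are realistically provable only in that restricted class of backgrounds -- exactly the restriction already forced on Proposition \ref{BVinclusion}. A secondary point demanding care is that the extra $C^\infty_M$ summands participate nontrivially in conditions (ii) and (iii) through the dual $h^\dagger \mapsto 2\mathrm{tr}_g(h^\dagger)$ of $f \mapsto 2fg$, so those checks must be performed on all four summands of $\mathscr{W}_g$ rather than transcribed wholesale from the GR computation of Proposition \ref{greenswitnessGR}.
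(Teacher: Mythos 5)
This statement is stated in the paper as a \emph{conjecture}, with no proof given: the paper explicitly records that ``the Green hyperbolicity of the underlying cochain complex for conformal gravity $\mathscr{W}_{g}$ remains an outstanding question.'' Your proposal correctly identifies that the whole first assertion reduces, via the machinery of \cite{bms}, to exhibiting a formally self-adjoint Green's witness for $\mathscr{W}_{g}$ --- but then you do not construct one. You describe a candidate shape (the GR witness extended over the $C^{\infty}_{M}$ summands and ``augmented in the middle by the second-order factor needed to absorb the two extra derivatives of the Bach operator'') and explicitly defer ``deciding precisely which factor of $\square^{2}$ the middle witness component must carry.'' That deferred step is not a routine verification: it is exactly the open problem. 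In particular, Definition \ref{greenswitness}(i) requires each $P^{n}=Q^{n-1}W^{n}+W^{n+1}Q^{n}$ to be Green hyperbolic, and with $DB_{g}$ fourth order and the adjacent differentials $\rho_{g}$, $\rho_{g}^{*}$ first order, there is a genuine mismatch of orders between $Q^{-1}W^{0}$ and $W^{1}Q^{0}$ in degree $0$ that your sketch does not resolve; appealing to ``a composition of Green hyperbolic operators is Green hyperbolic'' presupposes that $P^{0}$ actually factors that way, which is the thing to be proved. So the proposal reduces the conjecture to itself rather than proving it.

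There is also a concrete problem with your argument for the inclusion $\mathrm{Obs}^{cl}(-,\mathscr{E})\hookrightarrow\mathrm{Obs}^{cl}(-,\mathscr{W})$. You assert that Proposition \ref{BVinclusion} furnishes a degreewise-injective cochain map that ``is the inclusion of natural bundles'' $\mathrm{Vect}_{M}\hookrightarrow\mathrm{Vect}_{M}\oplus C^{\infty}_{M}$ and $\Omega^{1}_{M}\hookrightarrow\Omega^{1}_{M}\oplus C^{\infty}_{M}$ in the outer degrees. But in degrees $0$ and $1$ the bundles of $\mathscr{E}_{g}$ and $\mathscr{W}_{g}$ literally coincide, so a bundle inclusion there is forced to be the identity --- and the identity does not intertwine the second-order operator $D\mathrm{Ric}_{g}^{\sharp}$ with the fourth-order operator $DB_{g}$, even on a conformally flat Einstein background where these reduce (in suitable gauge) to $-\square^{\nabla}+2\mathrm{Riem}$ and $\square^{2}$ respectively. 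Whatever cochain map realizes Proposition \ref{BVinclusion} must therefore involve a nontrivial differential operator in at least one middle degree, and your claim that applying $\mathrm{Sym}$ to a summand inclusion yields a map of prefactorization algebras does not go through as written; the compatibility with the differentials $\mathscr{Q}$ is precisely the square you have not checked.
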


More importantly, there is reason to believe that the results of the preceding section hold when the theories are \textit{perturbative} and not necessarily just linear. Namely, we hope to use the $L_{\infty}$ algebras or DGLAs representing our field theories as in Section \ref{bvgravity} and recover the results of \cite{bms} described above, in which the authors truncated those $L_{\infty}$ algebras to their underlying cochain complexes. To do so, we will attempt to show some version of the following result.

\begin{conj}\label{LinfinityGreensWitness}
	The Green's witness $W$ for a free BV field theory $(F, Q, (-,-), W)$ which was defined by truncating an $L_{\infty}$ algebra to its underlying cochain complex may be lifted to the level of that $L_{\infty}$ algebra (perhaps only up to homotopy).
\end{conj}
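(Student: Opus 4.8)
The plan is to realize the lift through the \emph{homological perturbation lemma} (HPL), treating the higher brackets of the $L_\infty$ algebra as a perturbation of the linear differential $Q$ and transporting the contraction furnished by the Green's witness along that perturbation. First I would repackage the cochain-level data as a contraction: the witness $W$ together with the Green's operators produces the retarded/advanced homotopies $\Lambda_{\pm} = WG_{\pm}$ trivializing the forgetful map $j : \mathscr{E}_{g,c} \to \mathscr{E}_{g}$, i.e. $\partial\Lambda_{\pm} = Q\Lambda_{\pm} + \Lambda_{\pm}Q = j$. This is precisely the homotopy datum needed to run the HPL on the Chevalley--Eilenberg coalgebra $\mathrm{Sym}(\mathscr{E}_{g,c}[1])$ whose unperturbed codifferential is induced by $\ell_{1} = Q$, with $\Lambda_{\pm}$ extended to a coderivation homotopy $\widetilde{\Lambda}_{\pm}$ in the standard way.

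Second, I would identify the perturbation as the bracket terms. The full DGLA (or $L_\infty$) structure corresponds to the perturbed codifferential $D = \ell_{1} + \delta$, with $\delta = \ell_{2} + \ell_{\geq 3}$. Because $\ell_{2}$ strictly decreases the number of symmetric factors, $\delta$ lowers the polynomial-degree filtration on $\mathrm{Sym}$, so on each $\mathrm{Sym}^{n}$ only finitely many applications of $\delta$ are possible and the perturbation series
\[
\widetilde{\Lambda}_{\pm}^{\delta} \;=\; \widetilde{\Lambda}_{\pm}\sum_{k\geq 0}\bigl(\delta\,\widetilde{\Lambda}_{\pm}\bigr)^{k}
\]
terminates at each polynomial degree, giving a well-defined, homotopy-coherent object. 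Its one-input component recovers $\Lambda_{\pm} = WG_{\pm}$, and its higher components, the alternating composites $\Lambda_{\pm}\,\ell_{2}\,\Lambda_{\pm}\,\ell_{2}\cdots$, yield the lifted higher homotopies $W_{n} : \mathrm{Sym}^{n}\mathscr{E}_{g,c} \to \mathscr{E}_{g}$ with $W_{1} = W$. The ``up to homotopy'' clause is then automatic, since the HPL delivers an $L_\infty$-contraction rather than a strict one; at second order one checks directly that the defect $W\ell_{2} - \ell_{2}(W\otimes\mathrm{id} + \mathrm{id}\otimes W)$ is $Q$-exact and is trivialized by $W_{2} = \Lambda_{\pm}\circ(\text{that defect})$, the base case of the recursion implicit in the series.

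The hard part will be analytic rather than algebraic. Unlike the abstract HPL, the homotopy $\Lambda_{\pm}$ is implemented by the Green's operators $G_{\pm}$, which are integral operators with retarded/advanced support, whereas the brackets $\ell_{2}$ (Lie derivative, pointwise multiplication) are local. Each higher $W_{n}$ is therefore an alternating string of integral and local nonlinear operations, and I must show that (a) these composites are well-defined continuous multilinear maps on the relevant spaces of compactly supported sections, with wavefront sets controlled so that the intervening pointwise products of distributional solutions make sense; (b) the causal/support propagation is compatible with the time-ordering and causal-disjointness structure on $\mathbf{Loc}$ underlying the factorization products, so that the lifted witness still detects Green hyperbolicity degree by degree as in Definition \ref{greenswitness}; and (c) the formal self-adjointness condition (iii) of Definition \ref{greenswitness} survives the lift. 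This is the genuine obstacle, precisely because it is where the Lorentzian analysis cannot be black-boxed by abstract homotopy algebra.

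Point (c) is in fact a secondary but real difficulty on its own: the shifted symplectic pairing $\langle-,-\rangle$ makes $\mathscr{E}_{g}$ a \emph{cyclic} $L_\infty$ algebra, and the naive HPL does not preserve cyclic data, so the $W_{n}$ it produces need not satisfy $\langle W_{n}\varphi_{1},\varphi_{2}\rangle = \pm\langle\varphi_{1}, W_{n}\varphi_{2}\rangle$. I expect to need a symmetrized or cyclic variant of the perturbation lemma, exploiting that the BV action obeys $\{S,S\} = 0$ to guarantee invariance of the pairing under the brackets, and then symmetrizing over the two choices $\Lambda_{+}, \Lambda_{-}$. I would first establish the statement for linearized GR, where only $\ell_{2}$ is present and the support bookkeeping is most transparent, and only afterward attempt the general $L_\infty$ statement allowing $\ell_{\geq 3}$.
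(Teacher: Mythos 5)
The paper does not prove this statement: it is presented explicitly as a conjecture, with the surrounding text saying only that the author ``will attempt to show some version of the following result'' in future work. So there is no proof of record to compare yours against, and your proposal has to be judged on its own terms. Read that way, it is a sensible research plan --- homological perturbation against the polynomial-degree filtration is the natural first thing to try, and you are right that the coderivation induced by $\ell_{2}$ lowers that filtration so the formal series terminates degreewise --- but it is a plan, not a proof, and you say as much yourself when you defer the analytic core (well-definedness of the composites $\Lambda_{\pm}\,\ell_{2}\,\Lambda_{\pm}\,\ell_{2}\cdots$, support and wavefront control, cyclicity) to future work. Those are exactly the points where the conjecture is hard, so flagging them is honest but does not discharge them.

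There is also a structural gap in the algebraic half that you should repair before worrying about the analysis. The homological perturbation lemma takes as input a contraction of a \emph{single} complex: maps $i$, $p$, $h$ with $pi = \mathrm{id}$ and $\mathrm{id} - ip = dh + hd$ (plus side conditions). The datum furnished by the Green's witness is not of this form. The identity $Q\Lambda_{\pm} + \Lambda_{\pm}Q = j$ exhibits the map $j : \mathscr{E}_{g,c} \to \mathscr{E}_{g}$ between two \emph{different} complexes as null-homotopic, and on a single complex what $W$ gives you is $QW + WQ = P$ with $P$ Green hyperbolic --- emphatically not the identity and not $\mathrm{id} - ip$ for any retraction. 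So the HPL cannot be invoked off the shelf; you must first say precisely which deformation retract you are perturbing (a cone on $j$, the quasi-isomorphism $\Lambda = WG : \mathscr{E}_{g,c}[1] \to \mathscr{E}_{g}$ from the Green hyperbolic complex machinery, or something else), and the answer will change what the higher components $W_{n}$ actually are. Relatedly, before constructing the lift you need to fix what the lifted object is required to satisfy: condition (i) of Definition \ref{greenswitness} (Green hyperbolicity of $QW + WQ$) is a statement about linear differential operators and has no evident analogue for the arity-$\geq 2$ components, so ``lifted to the level of the $L_{\infty}$ algebra'' needs a target definition. Your suggestion to first treat linearized GR, where only $\ell_{2}$ survives and supports are trackable, is the right place to pin both of these down.
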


Much of what we have covered so far is motivated by our study of black hole thermodynamics. In \cite{waldBH}, the author shows that any Killing horizon $\mathcal{H}$ of a stationary and globally hyperbolic black hole spacetime $M$ contains a \textit{bifurcation surface} $\Sigma$: this is the intersection of $\mathcal{H}$ with a Cauchy surface $C$ so that $\mathcal{H}$ is generated by the null geodesics orthogonal to $\Sigma$. For a spacetime $M$ of dimension $n$, Wald associates to a diffeomorphism equivariant Lagrangian $n$-form $\mathbf{L}(g)$ and a vector field $\xi$ a Noether current $(n-1)$-form $\mathbf{J}(g, \xi)$. A solution $g$ of the nonlinear Einstein equations provides a Noether charge $(n-2)$-form $\mathbf{Q}(g, \xi)$ such that $\mathbf{J} = d\mathbf{Q}$. By letting $\xi$ be a linear combination of the time and axial Killing fields for the stationary and axisymmetric black hole solution $g$, Wald defines the \textbf{black hole entropy} as
\begin{equation}\label{entropydef}
	S_{\mathrm{ent}} := \int_{\Sigma} \mathbf{Q}(g, \xi). 
\end{equation}
Moreover, if $g$ is asymptotically flat and we perturb $S_{\mathrm{ent}}$ in the direction of an asymptotically flat metric perturbation $h$, we arrive at the \textit{first law of black hole thermodynamics}:
\begin{equation}\label{firstlaw}
	\delta_{h}S_{\mathrm{ent}} = \delta_{h}\mathcal{E} - \Omega^{(\mu)}\delta_{h}\mathcal{J}_{(\mu)},
\end{equation}
where $\mathcal{E}$ and $\mathcal{J}$ are the energy and angular momentum (with angular velocity coefficients $\Omega^{(\mu)}$). It is important to note that $\delta_{h}\mathcal{E}$ and $\Omega^{(\mu)}\delta_{h}\mathcal{J}_{(\mu)}$ are quantities defined on the ``sphere at infinity",\footnote{Often referred to as the ``end" of the Cauchy surface $C$.} which we denote as $E_{\infty}$. 

Because symmetry laws and distinguished submanifolds play a central role in the above, we fully expect that factorization algebras--whose application in physics was motivated by such symmetry laws and spacetime topology--will provide an ideal home for a more mathematically nuanced formulation of black hole thermodynamics. The immediate goal is to use the results of this paper to prove a result along the lines of the following, as well as others concerning black hole entropy.

\begin{conj}
		Let $\mathscr{E}_{g}$ be the DGLA representing the formal neighborhood of the asymptotically flat and stationary black hole solution $g \in [\mathrm{Lor}(M)/\mathrm{Diff}^{+}(M)]$, which is a BV classical field theory by Proposition \ref{einsteinisBV}, and let $\mathrm{Obs^{cl}}$ be its factorization algebra of classical observables. Then any choice of an asymptotically flat perturbation $h \in H^{0}(\mathscr{E}_{g})$ recovers a factorization algebra enhancement of Equation (\ref{firstlaw}), where the left hand side is included from $H^{0}(\mathrm{Obs^{cl}}(\Sigma))$ into $H^{0}(\mathrm{Obs^{cl}}(M))$ and the right hand side is included from $H^{0}(\mathrm{Obs^{cl}}(E_{\infty}))$.
\end{conj}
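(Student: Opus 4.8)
The plan is to realize the first law as the statement that two observables, each supported near one boundary component of a Cauchy slice, become cohomologous once included into $H^0(\mathrm{Obs^{cl}}(M))$, with the coboundary witnessing the equality supplied by the covariant phase-space version of Stokes' theorem. The central object is the linearized Noether charge. From the fixed background Killing field $\xi$ (a combination of the time and axial Killing fields) and the linearized theory $\mathscr{E}_{g}$ of (\ref{BVGR}), I would construct the $(n-2)$-form $\chi_{h'} := \delta_{h'}\mathbf{Q}(g,\xi) - i_{\xi}\,\delta_{h'}\Theta$, depending linearly on a metric perturbation $h' \in \Gamma(\mathrm{Sym}^{2}T^{*}_{M})$, where $\Theta$ is the presymplectic potential of the Einstein--Hilbert Lagrangian. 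For each closed codimension-two surface $S$ this defines a linear functional $\mathcal{O}_{S} \colon h' \mapsto \int_{S}\chi_{h'}$, which I would promote to a degree-zero linear observable in $\mathrm{Obs^{cl}}$ by using the shifted symplectic pairing (\ref{shiftedsymplectic}) to represent it (distributionally, then smeared transverse to $S$) as an element of $\Gamma_{c}(\mathrm{Sym}^{2}T_{M}) \subset \mathscr{E}_{g,c}$ supported in a collar neighborhood $U_{S}$ of $S$.

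The first technical step is to verify that $\mathcal{O}_{S}$ is $\mathscr{Q}$-closed, so that it determines a class in $H^{0}(\mathrm{Obs^{cl}}(U_{S}))$. In the complex (\ref{BVGR}) this amounts to $\mathrm{div}_{g}^{\flat}\mathcal{O}_{S}=0$, the conservation law for the Noether charge; it rests on the on-shell closedness $d\,\chi_{h'}=0$, which holds precisely because $\xi$ is a Killing field of the background ($L_{\xi}g = 0$), and which is the cohomological avatar, for (\ref{BVGR}), of the contracted linearized Bianchi identity already used in the proofs of Proposition \ref{einsteinisBV} and Proposition \ref{greenswitnessGR}. Since closed observables pair trivially with exact fields, this simultaneously encodes the gauge-invariance of the Noether charge under $h' \mapsto L_{X}g$. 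The second, and key, step is to move the surface: if $S$ and $S'$ cobound a region $\Omega$ inside a Cauchy slice, then Stokes' theorem gives $\mathcal{O}_{S} - \mathcal{O}_{S'} = \int_{\Omega} d\chi_{(-)}$, and since off-shell $d\chi_{h'}$ is proportional to the linearized field equations, this difference is $\mathscr{Q}$-exact through the pairing with the antifields in degree one of (\ref{BVGR}); hence $[\mathcal{O}_{S}] = [\mathcal{O}_{S'}]$ in $H^{0}(\mathrm{Obs^{cl}}(M))$. This is the factorization-algebraic incarnation of current conservation, and I expect it to be compatible with Cauchy constancy (Corollary \ref{observablesforGR}) and the naturality of the pushforward maps of Theorem \ref{linearcauchyconstancy}.

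With this machinery in place, I would take $S = \Sigma$ the bifurcation surface and $S' = E_{\infty}$ the sphere at infinity, which cobound the Cauchy slice $C$. On $\Sigma$ the vector field $\xi$ vanishes, so the $i_{\xi}\Theta$ term drops and $\mathcal{O}_{\Sigma}$ reproduces, up to the temperature normalization, the entropy variation $\delta_{h}S_{\mathrm{ent}}$ of (\ref{entropydef}); at infinity the standard ADM identification gives $\mathcal{O}_{E_{\infty}} = \delta_{h}\mathcal{E} - \Omega^{(\mu)}\delta_{h}\mathcal{J}_{(\mu)}$. The coboundary relation of the previous step then yields $[\mathcal{O}_{\Sigma}] = [\mathcal{O}_{E_{\infty}}]$ in $H^{0}(\mathrm{Obs^{cl}}(M))$, which is exactly the assertion that the left-hand side of (\ref{firstlaw}), included from $H^{0}(\mathrm{Obs^{cl}}(\Sigma))$, and the right-hand side, included from $H^{0}(\mathrm{Obs^{cl}}(E_{\infty}))$, agree. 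Pairing this cohomological identity against the fixed solution $h \in H^{0}(\mathscr{E}_{g})$ recovers (\ref{firstlaw}) on the nose.

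The hard part will be making the \emph{asymptotic} observable $\mathcal{O}_{E_{\infty}}$ honest. The factorization algebra is built from compactly supported sections, whereas $\mathcal{E}$ and $\mathcal{J}_{(\mu)}$ are surface integrals at spatial infinity that rely on asymptotic flatness and nontrivial fall-off, and $E_{\infty}$ is not a region of $M$ in the literal sense. I anticipate needing either a conformal compactification in which $E_{\infty}$ appears as a genuine object of an enlarged category of spacetimes, or a theory of asymptotic observables defined as a limit of the cosheaf over an exhaustion by relatively compact causally convex sets, with careful control of the convergence of the boundary integrals. A secondary subtlety is the precise matching of Wald's covariant phase-space symplectic current with the pairing (\ref{shiftedsymplectic}) and the Poisson structures $\tau_{(-1)}, \tau_{(0)}$: one must check that the distributional surface observables, after smearing transverse to $S$, define classes independent of the smearing, and that the bulk primitive over $\Omega$ indeed lands in the domain of $\mathscr{Q}$. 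I expect the conceptual weight to concentrate in these two points, the remaining steps being, modulo that, adaptations of Wald's construction to the cochain-level language of (\ref{BVGR}).
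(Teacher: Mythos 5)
The statement you are proving is a \emph{conjecture}: the paper offers no proof, only a remark immediately afterward that names the anticipated obstructions (how to evaluate $\mathrm{Obs^{cl}}$ on the codimension-two submanifolds $\Sigma$ and $E_{\infty}$ and include into $\mathrm{Obs^{cl}}(M)$, the fact that $E_{\infty}$ is a sphere at infinity possibly requiring conformal compactification, the use of the $P_{1}$/Peierls structure to project to Cauchy surfaces, and a cohomological enforcement of asymptotic flatness via a DGLA action). Your strategy --- recasting Wald's Noether-charge argument so that $\delta_{h}S_{\mathrm{ent}}$ and $\delta_{h}\mathcal{E} - \Omega^{(\mu)}\delta_{h}\mathcal{J}_{(\mu)}$ become cohomologous classes of smeared surface observables, with the Stokes primitive over the interpolating region furnishing the $\mathscr{Q}$-coboundary --- is sensible, and it converges on essentially the same set of difficulties the paper flags. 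So the proposal is aligned in spirit with the paper's intended line of attack, but it is a roadmap, not a proof.

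The genuine gaps are exactly the two points you defer, and they are not peripheral: they \emph{are} the conjecture. First, the passage from a surface integral $\int_{S}\chi_{h'}$ to an element of $H^{0}(\mathrm{Obs^{cl}}(U_{S}))$ requires showing that the distributional kernel supported on $S$, after transverse smearing, yields a class independent of the smearing; this is a codimension-two localization statement for which the prefactorization structure on $\mathbf{Loc}$ (whose objects are open causally convex regions) gives you no direct tool --- the paper's remark suggests this needs the $P_{1}$ structure for codimension one and an additional DGLA action for codimension two, neither of which appears in your argument. Second, your cobordism step breaks down literally at $E_{\infty}$: the region $\Omega$ between $\Sigma$ and the sphere at infinity is not relatively compact, $\mathcal{O}_{E_{\infty}}$ is not built from compactly supported sections, and the convergence of $\int_{\Omega}d\chi_{(-)}$ and of the limiting surface observables depends on fall-off conditions that the free BV complex (\ref{BVGR}) does not encode. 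You acknowledge both issues, but acknowledging them leaves the proposal at the same stage as the paper's own remark. A smaller technical caution: the on-shell closedness $d\chi_{h'}=0$ holds only for $h'$ satisfying the linearized equations, so your $\mathscr{Q}$-exactness claim for $\mathcal{O}_{S}-\mathcal{O}_{S'}$ needs to be checked at cochain level (that $d\chi_{(-)}$ is literally in the image of pairing against $D\mathrm{Ric}_{g}^{\sharp}$ composed with the degree-one antifields), not merely on cohomology.
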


\begin{rmk}
	We are investigating a few lines of attack to prove this--the central issue being how to evaluate $\mathrm{Obs^{cl}}$ on the submanifolds $\Sigma$ and $E_{\infty}$ to include into $\mathrm{Obs^{cl}}(M)$. The usual definition of a (pre)factorization algebra allows us to compare observables over disjoint sets via the factorization product; however, multiplying observables defined over submanifolds of nontrivial codimension may be an issue--not to mention that one of the manifolds is a sphere at infinity.\footnote{This last fact suggests we may need to invoke a conformal compactification of $M$, as described in \cite{prinz}.}
	
 In this instance, the causal structure of the ambient spacetime is of  central importance. The results of \cite{bms} include the definition of a $P_{1}$ algebra structure on $\mathrm{Obs^{cl}}$, as mentioned in Remark \ref{timeevolution}, which takes advantage of local constancy in the time direction (also known as time evolution): this may allow us to project observables of the theory to codimension one Cauchy surfaces.

 We must also take advantage of asymptotic flatness. One way to enforce asymptotic flatness is cohomologically, by using an action of a DGLA: this is outlined in \cite{grady}, and will assist in further restricting the observables to the codimension two surfaces $\Sigma$ and $E_{\infty}$. To do this, it may be useful to invoke Conjecture \ref{LinfinityGreensWitness} as a lemma. 
\end{rmk}

\subsection{Acknowledgements} This work resides in an interdisciplinary confluence, so I must thank colleagues from a variety of fields. Gratitude goes to my coresearchers Ryan Grady and Surya Raghavendran for our continued work on an ambitious project; Shadi Tahvildar-Zadeh for helping with PDE and differential geometry details and for being a stellar postdoctoral supervisor; Philip Mannheim and Jim Wheeler for helping me understand conformal gravity; Alex Schenkel, Marco Benini, and Giorgio Musante for assisting in categorical and homological nuances in Lorentzian signature; and Owen Gwilliam for useful conversations concerning the structure of this paper. Finally, thank you to my wife Julie for being a wonderful life partner and to our little Mira for giving me a reason to work hard.

\end{document}